\renewcommand*{\le}{\leqslant}
\renewcommand*{\ge}{\geqslant}
\renewcommand*{\epsilon}{\varepsilon}
\renewcommand*{\emptyset}{\varnothing}
\newcommand{\pref}{\succcurlyeq}
\newcommand{\R}{\mathbb{R}}
\newcommand{\ER}{\ensuremath{\exists\R}}
\newcommand{\size}{\operatorname{size}}
\newcommand{\problem}[3]{\begin{center}\begin{tabular}{ll}
		\toprule 
		\multicolumn{2}{l}{#1} \\ \midrule 
		\textit{Instance:} & #2 \\ 
		\textit{Question:} & #3 \\
		 \bottomrule
	\end{tabular}\end{center}}
\begin{document}

% Page heads
\markboth{Dominik Peters}{Recognising Multidimensional Euclidean Preferences}

% Title portion
\title{Recognising Multidimensional Euclidean Preferences} 
\author{DOMINIK PETERS
\affil{University of Oxford}
}
\begin{abstract}
Euclidean preferences are a widely studied preference model, in which decision makers and alternatives are embedded in $d$-dimensional Euclidean space. Decision makers prefer those alternatives closer to them. This model, also known as multidimensional unfolding, has applications in economics, psychometrics, marketing, and many other fields. We study the problem of deciding whether a given preference profile is $d$-Euclidean. For the one-dimensional case, polynomial-time algorithms are known. We show that, in contrast, for every other fixed dimension $d > 1$, the recognition problem is equivalent to the existential theory of the reals (ETR), and so in particular NP-hard. We further show that some Euclidean preference profiles require exponentially many bits in order to specify any Euclidean embedding, and prove that the domain of $d$-Euclidean preferences does not admit a finite forbidden minor characterisation for any $d > 1$. We also study dichotomous preferences
and the behaviour of other metrics, and survey a variety of related work.
\end{abstract}

%ACM is moving forward with the 2012 Classification system: http://dl.acm.org/ccs_flat.cfm. Please generate the CCSXML tex code through the online interactive system and insert the code below. 

\begin{CCSXML}
	<ccs2012>
	<concept>
	<concept_id>10003752.10010070.10010099.10010100</concept_id>
	<concept_desc>Theory of computation~Algorithmic game theory</concept_desc>
	<concept_significance>300</concept_significance>
	</concept>
	<concept>
	<concept_id>10010147.10010178.10010219.10010220</concept_id>
	<concept_desc>Computing methodologies~Multi-agent systems</concept_desc>
	<concept_significance>300</concept_significance>
	</concept>
	<concept>
	<concept_id>10003752.10010070.10010099.10010108</concept_id>
	<concept_desc>Theory of computation~Representations of games and their complexity</concept_desc>
	<concept_significance>100</concept_significance>
	</concept>
	</ccs2012>
\end{CCSXML}

\ccsdesc[300]{Theory of computation~Algorithmic game theory}
\ccsdesc[300]{Computing methodologies~Multi-agent systems}
\ccsdesc[100]{Theory of computation~Representations of games and their complexity}

\keywords{Euclidean preferences, multidimensional unfolding, computational social choice, forbidden minor characterisations, existential theory of the reals}

\begin{bottomstuff}
Dominik Peters is supported by EPSRC.

Author' address: D. Peters, Department of Computer Science, University of Oxford; email: \url{dominik.peters@cs.ox.ac.uk}
\end{bottomstuff}

\maketitle

\section{Introduction}
\label{sec:intro}

The study of preferences spans a multitude of fields: economics (and particularly game theory and social choice), political science, psychology, multi-agent systems, marketing, and others. An important element of working with preferences is \emph{understanding} them by constructing models for them and identifying underlying structure. For example, in a psychological model of preferences, we aim to discover which underlying psychological process has generated the preferences we now observe \cite{coombs1964theory}. In political science, one might wonder about the underlying structure of `political space' by analysing voter preferences \cite{merrill1999unified}. In economics, working with well-structured preferences often allows a model to become analytically tractable.

This paper will mainly use the lens and language of \emph{computational social choice}, but our results also apply to a formal study of any of the concerns outlined above. In recent years, much work in computational social choice has focussed on identifying structure in a given preference profile. The reason for this is simple: many of the problems social choice aims to solve (such as preference aggregation, committee selection, or fair and efficient allocation)  are computationally hard. However, if we manage to identify underlying, hopefully low-dimensional structure in the input profile, we can exploit this structure to guide algorithms. This approach has been successful, particularly for the domain of \emph{single-peaked preferences}: once we have identified an \emph{axis} on which an input profile is single-peaked, we may efficiently find a Kemeny ranking \cite{brandt2010bypassing}, identify an optimal committee according to the Chamberlin-Courant rule \cite{betzler2013computation}, calculate possible winners for many voting rules \cite{faliszewski2009shield}, and solve the stable roommates problem \cite{bartholdi1986stable}.

This leaves the question of whether we can, in fact, efficiently find a certificate for single-peakedness (or another desired domain restriction) that such algorithms can use. Fortunately, the answer is often positive: there are efficient and certificate-producing recognition algorithms for preferences that are \emph{single-peaked} \cite{escoffier2008single}, \emph{single-crossing} \cite{elkind2012clone}, \emph{1-Euclidean} \cite{doignon1994polynomial}, or \emph{single-peaked on a tree} \cite{trick1989recognizing}. Indeed, each of these preference domains is quite well-understood, for example in terms of forbidden-substructure characterisations \cite{ballester2011characterization}, concise representations of all certificates \cite{peters2016preferences,bartholdi1986stable}, containment relations between the different domains \cite{elkind2014characterization}, and the probability with which a random preference profile falls within a given domain \cite{arxiv/BrunerL-likelihoodSP}.

Intriguingly, there is another preference domain for which we do not have a comparable amount of understanding, yet it is an extremely popular modelling choice across disciplines. Known as \emph{spatial preferences}, or as the \emph{$d$-Euclidean domain}, or as \emph{multidimensional unfolding}, this preference domain contains profiles that can be `embedded' into $d$-dimensional Euclidean space. Precisely, a preference profile is $d$-Euclidean if we can assign every voter and every alternative a point in $\R^d$ such that voters prefer those alternatives that are closer to them (according to the usual Euclidean metric) to those that are further away. This characterisation of preferences has intuitive appeal: considering $\R^d$ as a continuous `policy space', within which alternatives can vary along multiple dimensions, each voter is identified with an \emph{ideal point} \cite{bennett1960multidimensional}. The best alternative for the voter is the one that minimises the distance to the ideal policy. We could also think of a facility location problem, where a single facility needs to be placed somewhere in the plane, with each decision-maker preferring the facility to be placed as close to them as possible \cite{hotelling1929stability}.

In practice, the embedding of voters and alternatives into $\R^d$ is hidden, and we only have access to the ordinal ranking data in form of a preference profile \cite{anshelevich2015approximating}. Given this data, can we recover an appropriate embedding into $\R^d$ that explains the preferences? This problem is known as multidimensional unfolding \cite{bennett1960multidimensional}, and a large variety of methods that attempt to estimate embeddings have been proposed in the statistics and psychometrics literature [for a modern exposition see, e.g., \citeNP{borg2005modern}]. None of these methods is guaranteed to return a suitable embedding whenever it exists and terminate in polynomial time.

We analyse this problem from a formal computational perspective. 
In particular, we prove that the decision problem of identifying $d$-Euclidean preference profiles is NP-hard for each fixed $d\ge 2$. More precisely, we prove that the problem is equivalent to the existential theory of the reals (ETR), and thus is in fact $\ER$-complete. This means that the recognition problem is unlikely to be contained in NP (though it is decidable in PSPACE). Using recent results about hyperplane arrangements, we deduce that, for each fixed $d\ge 2$, there exist $d$-Euclidean preference profiles such that the coordinates of any $d$-Euclidean embedding require exponentially many bits to specify. Thus, there is provably no polynomial-time algorithm that, on input a preference profile, outputs a $d$-Euclidean embedding if one exists. 

While the domains of single-peaked and of single-crossing preferences admit characterisations by a finite list of forbidden configurations \cite{ballester2011characterization, bredereck2013characterization}, the 1-Euclidean domain does not admit such a characterisation \cite{ChenPW15}. \citet{ChenPW15} conjectured that the same is true for the domain of $d$-Euclidean preferences, for each fixed $d\ge 2$. We use a connection with the theory of oriented matroids to prove their conjecture.

The results in this paper cast doubt on the fruitfulness of exploiting structure in $d$-Euclidean preferences in computational social choice, and they also limit the extent to which the spatial preference model can be used to reliably explain observed preference data. Future work could explore ways of mitigating this situation. At the end of this paper, we briefly sketch one way that this could be done, namely by replacing the Euclidean $\ell_2$-metric by the $\ell_1$- or $\ell_\infty$-metric. We show that the recognition problems corresponding to these metrics are contained in NP.

\section{Related Work}
\label{sec:related-work}

Before we establish notations and hardness results, let us first survey work connected to the topics of this paper. 

In the psychometrics literature, the problem of detecting $d$-Euclidean preferences is known as \emph{multidimensional unfolding} \cite{hays1961multidimensional,bennett1960multidimensional}, a notion based on Coomb's \citeyear{coombs1950psychological,coombs1964theory} \emph{unidimensional unfolding} (which in our terminology corresponds to 1-Euclidean preferences). Many ways of solving the multidimensional unfolding problem have been proposed [e.g., \citeNP{roskam1968metric,kruskal1969geometrical,takane1977nonmetric}] which are based on iteratively fitting the data by optimising some badness-of-fit function. These methods face the problem of producing \emph{degenerate} embeddings in which voters are placed equidistantly to multiple alternatives. Some methods have attempted to partly remedy this problem [see, e.g., \citeNP{busing2005avoiding} and the references therein]. However, the papers in this area do not take a formal approach and typically do not establish formal guarantees concerning correctness and runtime bounds.

%\subsection{The one-dimensional case}
\label{subsec:one-dimension}

The domain of \emph{one}-dimensional Euclidean preferences exhibits a lot of structure and has many useful theoretical properties. In particular, every 1-Euclidean profile is both single-peaked and single-crossing \cite{grandmont1978intermediate}. In fact, in every 1-Euclidean embedding, the order of the alternatives forms a single-peaked axis, and the order of the voters forms a single-crossing order. Being a (strictly) stronger condition than both single-peakedness and single-crossingness [see, e.g., \citeNP{elkind2014characterization}], the 1-Euclidean preference domain inherits all of their good properties: 1-Euclidean profiles admit a transitive majority relation (and thus Condorcet winners), have the representative voter property, admit a non-manipulable voting rule, and a large number of important hard computational problems become tractable on this domain.

The combinatorial substructure of the 1-Euclidean domain has allowed researchers to devise polynomial-time algorithms that recognise this domain. The first such algorithm appears to have been found by \citet{doignon1994polynomial} in the context of unidimensional unfolding, and subsequently rediscovered by \citet{elkind2014recognizing}. This algorithm proceeds by finding a single-crossing ordering of the voters (which is unique up to reversal), then discovers a suitable ordering of the alternatives, and finally uses Linear Programming to check for an embedding into the real line making the input profile 1-Euclidean.
An alternative approach is described by \citet{knoblauch2010recognizing}. \citeauthor{knoblauch2010recognizing}'s algorithm exploits single-peakedness rather than the single-crossing condition: the algorithm first identifies a suitable single-peaked axis, and then again uses Linear Programming to produce the real embedding.

%\subsection{Euclidean dimension}

The \emph{Euclidean dimension} of a preference profile is the smallest integer $d$ such that the profile is $d$-Euclidean. Any preference profile over $m$ alternatives is $(m-1)$-Euclidean: to see this, place alternatives on the vertices of the $(m-1)$-simplex in $\R^{m-1}$; then any voter can be placed appropriately within the simplex to reflect the voter's preferences. \citet{bogomolnaia2007euclidean} study the question of which dimension is sufficient to represent all preference profiles of a certain size. They show that the dimension that suffices to embed all profiles with $n$ voters and $m$ alternatives is between $\min\{n-1,m-1\}$ and $\min\{n,m-1\}$. They also show that the Condorcet cycle on $k$ alternatives has Euclidean dimension $k-1$.

%\subsection{Counting}
\citet{kamiya2006arrangements, kamiya2011codimension} study the question of how many different voter-maximal $d$-Euclidean preference profiles there are for each fixed number $m$ of alternatives. They introduce and use `mid-hyperplane arrangements' to study this question, and answer it for the cases $d = 1$ and $d = m-2$. \citeauthor{kamiya2006arrangements} found the same problem for other $d$ to be ``quite difficult at this stage''. Our complexity result gives an indication that $d$-Euclidean preferences are much more chaotic when $d\ge 2$, giving an indication why their counting problem may be difficult to solve for $d\ge 2$.

%\subsection{Other preference domains, approximations}
Every 1-Euclidean profile is also single-peaked. Similarly, there seems to be a connection between profiles that are multidimensional single-peaked \cite{barbera1993generalized, sui2013multi} and ones that are multidimensional Euclidean, though we are not aware of a formal result to that effect. In recent years, there has been a lot of interest in profiles that are ``almost structured'' \cite{erd-lac-pfa:c:nearly-sp, cor-gal-spa:c:spwidth, bre-che-woe:c:nearly-restricted, elkind2014detecting}. To formalise the notion of closeness, a wide variety of metrics have been proposed (such as finding the minimum number of voters that need to be removed to obtain a structured profile, or finding a way of partitioning the profile into structured subprofiles). Let us note here that the Euclidean dimension of a profile could also be viewed as such a metric -- the smaller the dimension of the profile, the more structured it is. Of course, it follows from the results of this paper that it is hard to evaluate this metric. This hardness phenomenon has also been found for most other metrics [see the papers cited previously].

\section{Preliminaries}
\label{sec:preliminaries}

\subsection{Euclidean preferences}
\label{subsec:euclidean-prefs}

Let $A$ be a finite set of \emph{alternatives} or \emph{candidates}. A \emph{preference relation} $\pref$ over $A$ (usually referred to as a \emph{vote}) is a complete and transitive binary relation over $A$. We denote by $\succ$ the strict part of $\pref$, that is, $a \succ b$ if and only if $a \pref b$ but $b\not\pref a$; and we denote by $\sim$ the indifference part of $\pref$, with $a \sim b$ if and only if $a \pref b$ and $b \pref a$. A \emph{profile} $V$ over $A$ is a set of votes over $A$. For notational convenience, we give voters names like $v$ or $i$, and refer to their preference relation by $\pref_v$ and $\pref_i$.

Let $V$ be a profile over $A$. We say that $V$ is \emph{$d$-Euclidean} (where $d\ge 1$) if there exists a map $x : V \cup A \to \R^d$ satisfying
\begin{equation}
\label{eqn:euclidean}
a \succ_v b \implies \|x(v) - x(a)\| < \| x(v) - x(b)\| \qquad \text{for all $v\in V$ and $a,b\in A$.}
\end{equation}
Thus, voter $v$ prefers those alternatives which are closer to $v$ according to the embedding $x$. Here, $\|\cdot\|$ refers to the usual Euclidean $\ell_2$-norm on $\R^d$, that is
\[ \| (x_1,\dots,x_d) \| = \| (x_1,\dots,x_d) \|_2 = \sqrt{ x_1^2 + \cdots + x_d^2 }.  \]
The \emph{(open) ball} of radius $r$ centred at $c$ is $B(c,r) = \{ x \in \R^d : \| x - c \| < r \}$.

Typically, we will consider profiles of \emph{strict orders} in which every vote $\pref_v$ is antisymmetric (that is, we do not allow ties). Notice that in any $d$-Euclidean embedding of a profile of strict orders, no two alternatives can reside at the same point of $\R^d$.

In some applications, it makes sense to consider preferences that include indifferences (ties). An extreme case, which nevertheless finds many applications, is that of \emph{dichotomous preferences}. A vote $\pref$ is \emph{dichotomous} if there are no three alternatives $a,b,c\in A$ with $a \succ b \succ c$. Equivalently, $\pref$ is dichotomous if $A$ splits into approved and non-approved alternatives. That is, we can write $A = A_1 \cup A_2$ with $A_1 \cap A_2 = \emptyset$ satisfying $a \succ b$ iff $a \in A_1$ and $b\in A_2$. We then say that the voter \emph{approves} of the alternatives in $A_1$, while the voter does not approve the alternatives in $A_2$. A dichotomous vote can (and will) be specified by just giving the set $A_1$ of approved alternatives. 

Our definition of Euclidean preferences applies to dichotomous preferences as well. Following the terminology of \citet{elkind2015structured}, we call a profile of dichotomous preferences $d$-DE (Dichotomous Euclidean) if it is $d$-Euclidean. In this context, the definition requires that there is an embedding $x : V \cup A \to \R^d$ so that for each voter $v\in V$, the set of approved alternatives of $v$ coincides with the set of alternatives contained in some ball $B(x(v), r_v)$ centred at $x(v)$. We call a profile of dichotomous preferences $d$-DUE (Dichotomous \emph{Uniform} Euclidean) if there is an embedding $x : V \cup A \to \R^d$ so that for each voter $v\in V$, the set of approved alternatives of $v$ coincides with the alternatives contained in the \emph{unit} ball $B(x(v), 1)$ centred at $x(v)$.

Some authors define Euclidean preferences in a subtly different way from us. One popular definition involves reversing the direction of the implication arrow in (\ref{eqn:euclidean}). Notice that under this definition, whenever a voter is equidistant between two alternatives, the voter is free to break the tie in either way. In particular, when $d \ge 2$, then \emph{every} preference profile is ``$d$-Euclidean'' under this definition -- place all voters at the origin, and position alternatives on the unit sphere around the origin. In the area of multidimensional unfolding, embeddings of this type are said to include \emph{degeneracies} [see, e.g., \citeNP{busing2005avoiding}]. Our definition circumvents this issue by just outright disallowing degeneracies. 

\citet{bogomolnaia2007euclidean} use another different definition: they replace the implication in (\ref{eqn:euclidean}) by an if-and-only-if. This definition is equivalent to ours for strict preferences, but is much more restrictive for preferences including ties: \citeauthor{bogomolnaia2007euclidean}'s definition requires that whenever a voter $v$ is indifferent between $a$ and $b$, then $a$ and $b$ are equidistant to $v$. Our definition does not impose any relation on the relative distances in cases of ties.

\subsection{Forbidden Substructures}
Let $V$ be a profile over the alternative set $A$. If we delete some alternatives, and are left with the set $A' \subseteq A$, we can obtain the restricted profile $V|_{A'}$ where every vote is restricted in the obvious way: ${\succ}|_{A'} := {\succ} \cap (A' \times A')$. Now, if $V$ is a profile over $A$, and $W$ is a profile over $B$, we say that $V$ \emph{contains} $W$ if we can obtain $W$ by first deleting some alternatives and voters from $V$, and then relabelling the remaining alternatives and reordering the remaining voters. A preference domain $\mathcal R$ (that is, a set of profiles) may then be \emph{characterised by forbidden configurations} by giving a set $\mathcal S$ of profiles such that $\mathcal R = \{ \text{profile } V : V \text{ does not contain any $W\in\mathcal S$} \}$. 

We call a preference domain $\mathcal R$ \emph{hereditary} if it is closed under containment. That is, if $V \in \mathcal R$ and $V$ contains $W$, then $W\in\mathcal R$. The $d$-Euclidean domain is hereditary for any $d\ge 1$. Note that any hereditary domain $\mathcal R$ is characterised by its complement $\overline{\mathcal R}$ (that is, its set of counterexamples) in this way. However, a satisfying characterisation will either use a finite set of obstructions, or be otherwise highly structured.

\subsection{Existential theory of the reals (ETR)}
\label{subsec:etr}

The \emph{language} of the first-order \emph{theory of the reals} consists of formulas using as symbols (i) a countable collection of variable symbols $x_i$, (ii) constant symbols 0 and 1, (iii) addition, subtraction, multiplication symbols, (iv) the equality ($=$) and inequality ($<$) symbols, (v) Boolean connectives ($\lor,\land,\lnot$), (vi) universal and existential quantifiers ($\forall, \exists$). 
The \emph{theory} of the reals consists of all true sentences in this language, interpreted using the obvious semantics (where quantifiers quantify over the real numbers $\R$). 

The \emph{existential theory of the reals (ETR)} consists of the true sentences of the form
\[ \exists x_1\in \R \: \exists x_2 \in \R \dots \exists x_n \in \R \quad F(x_1,x_2,\dots,x_n) \]
with $F(x_1,x_2,\dots,x_n)$ a quantifier-free formula in the language just defined. In other words, $F$ is a Boolean combination of equalities and inequalities of real polynomials.

The \emph{decision problem} of ETR is the problem of deciding whether a given sentence of the above form is true, that is whether it is a member of ETR. \citet{schaefer2010complexity} introduced the complexity class $\ER$ as the class of decision problems that admit a polynomial-time many-one reduction to the decision problem of ETR. Thus, $\ER$ captures the computational complexity of the existential theory of the reals. We say that a problem $A$ is $\ER$-\emph{hard} if all problems in $\ER$ reduce to $A$ in polynomial time. We say that $A$ is $\ER$-\emph{complete} if it is contained in $\ER$ and is $\ER$-hard.

From the definition of ETR it is not even clear that the decision problem of ETR is decidable. By introducing a quantifier-elimination procedure, \citet{tarski1948decision} showed that ETR is in fact decidable. Since then, a variety of algorithmic improvement have been made over Tarski's procedure (which does not admit an elementary time bound), and there exist algorithms with a singly-exponential time dependence in the number of variables \cite{grigor1988complexity,renegar1992computational}. In addition, \citet{canny1988some} obtained the astonishing result that ETR can be solved in polynomial space. Thus $\ER \subseteq \text{PSPACE}$.

From the other direction, it is easy to see that ETR can be used to solve the propositional satisfiability problem (3SAT): we can encode that a variable $x$ is either true or false ($x = 0 \lor x = 1$), and we can encode a clause like $(x_1\lor \lnot x_2 \lor x_3)$ through $(x_1 + (1 - x_2) + x_3 \ge 1)$.\footnote{Some definitions of ETR and $\ER$ do not allow use of the equality symbol, but this makes no difference up to polynomial-time transformations \cite{SchaeferNash2015}.} Thus, ETR is NP-hard, and every $\ER$-hard problem is also NP-hard. Together, we have the containments $\text{NP} \subseteq \ER \subseteq \text{PSPACE}$.

Multiple \ER-complete problems are known, and many of them are questions of the form ``can a given combinatorial object be geometrically represented?''. Particular examples include recognising intersection graphs of line segments in the plane \cite{schaefer2010complexity}, of unit disk graphs \cite{kang2012sphere}, or of unit distance graphs \cite{schaefer2013realizability}. The problem of recognising $d$-Euclidean preference profiles falls exactly into this category: trying to find a geometric embedding that `explains' a given combinatorial structure. Another \ER-complete problem is a decision version of the problem of finding a Nash equilbrium of a non-cooperative game \cite{SchaeferNash2015}. In particular, it is \ER-complete to decide whether a given 3-player game has a Nash equilibrium within a given ball (in the simplex of mixed strategies). \ER-hardness also holds for various other decision problems related to Nash equilibria, and even restricted to symmetric games \cite{garg2015etr}.

\subsection{Arrangements of hyperplanes}
\label{subsec:arrangements-of-hyperplanes}
Our exposition and terminology follows \citet{kang2012sphere}.

An \emph{(affine) $d$-hyperplane} is a set of form $h = \{ x\in \R^d : c^Tx = b \} \subseteq \R^d$ for some $c \in \R^d$ and $b\in\R$. A particular example of a hyperplane, for given $p,q\in\R^d$, is the set $\{ x\in\R^d : \|x -p\| = \|x -q\| \}$ of points that are equidistant to $p$ and $q$; in two dimensions, this is the \emph{perpendicular bisector}. Any hyperplane $h$ divides $\R^d\setminus h$ into two connected components, namely the half-planes $c^Tx > b$ and $c^Tx < b$. We can give $h$ an \emph{orientation} by (arbitrarily) designating one of these components as $h$'s \emph{positive side} $h^+$, and the other as $h$'s \emph{negative side} $h^-$. We call a hyperplane with a chosen orientation an \emph{oriented hyperplane}. An \emph{oriented hyperplane arrangement} $(h_1,\dots,h_n)$ is a finite ordered collection of oriented hyperplanes in $\R^d$.

Given an oriented hyperplane arrangement $\mathcal H = (h_1,\dots,h_n)$, we can assign to each point $x\in\R^d$ its \emph{sign vector} $\sigma(x) \in \{-,0,+\}^n$ by setting
\[ \sigma(x)_i = \begin{cases}
+ & \text{if } x \in h_i^+ \\
0 & \text{if } x \in h_i \\
- & \text{if } x \in h_i^-.
\end{cases} \]
Thus, the sign vector of $x$ records, for each oriented hyperplane in the arrangement, on which side of the hyperplane $x$ lies. The \emph{combinatorial description} $\mathcal D(\mathcal H)$ of $\mathcal H$ is the collection of all sign vectors induced by the arrangement $\mathcal H$, that is \[\mathcal D(\mathcal H) = \{ \sigma(x) : x \in \R^d \}.\]
If $\mathcal D(\mathcal H) = \mathcal D(\mathcal H')$, then we say that $\mathcal H$ and $\mathcal H'$ are \emph{isomorphic}.

Every connected component of $\R^d \setminus \mathcal H = \R^d \setminus (h_1 \cup \cdots \cup h_n)$ is called a \emph{cell} (or \emph{chamber} or \emph{region}). All points in the same cell have the same sign vector.

\section{The Recognition Problem for Euclidean Preferences}
\label{sec:recognition}

In this section we will show that the problem of recognising $d$-Euclidean preferences is $\ER$-complete for each fixed $d\ge 2$. We will do this by reducing from a problem concerning arrangements of hyperplanes. But first let us formally define the relevant decision problem, and verify that the problem is in fact contained in $\ER$.

\problem
{$d$-EUCLIDEAN}
{set $A$ of alternatives, profile $V$ of strict orders over $A$}
{is $V$ $d$-Euclidean?}

\begin{proposition}
	\label{prop:ETR-containment}
	\textup{$d$-EUCLIDEAN} is contained in $\ER$ for every $d\ge 1$. In particular it is contained in \textup{PSPACE}.
\end{proposition}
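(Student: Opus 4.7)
The plan is direct: the definition of $d$-Euclidean in (\ref{eqn:euclidean}) is already almost literally an existential sentence over $\R$, so I would give a straightforward polynomial-time reduction to the decision problem of ETR, and then invoke Canny's theorem for the PSPACE bound.

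More precisely, I would introduce $d\cdot(|V|+|A|)$ real variables, organised into vectors $x(v)\in\R^d$ for each voter $v\in V$ and $x(a)\in\R^d$ for each alternative $a\in A$, and existentially quantify over all of them. For each voter $v$ and each ordered pair $a,b\in A$ with $a\succ_v b$, condition (\ref{eqn:euclidean}) asks that $\|x(v)-x(a)\| < \|x(v)-x(b)\|$. Since both sides are non-negative, this inequality is equivalent to the polynomial strict inequality
\[
\sum_{k=1}^d \bigl(x(v)_k - x(a)_k\bigr)^2 \;<\; \sum_{k=1}^d \bigl(x(v)_k - x(b)_k\bigr)^2,
\]
which has degree $2$ in the variables and bit-size $O(d)$. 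Squaring in this way is the only mildly non-trivial move; it lets me stay entirely within the polynomial language of ETR rather than needing a square-root symbol.

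Taking the conjunction of all such inequalities — at most $|V|\cdot|A|(|A|-1)$ of them — and prefixing the existential quantifier block over the $d(|V|+|A|)$ coordinate variables yields a sentence $\Phi$ in the ETR language of total size polynomial in $|V|$, $|A|$, and $d$. By construction, $\Phi$ is true if and only if an embedding witnessing (\ref{eqn:euclidean}) exists, i.e.\ if and only if $V$ is $d$-Euclidean. The map $V \mapsto \Phi$ is computable in polynomial time, giving the required many-one reduction and hence membership of $d$-EUCLIDEAN in $\ER$. The PSPACE statement then follows immediately from the inclusion $\ER\subseteq\text{PSPACE}$ recalled in Section~\ref{subsec:etr}. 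There is really no substantive obstacle here; the only point to verify is that squaring preserves the strict inequality, which holds because $\|\cdot\|\ge 0$.
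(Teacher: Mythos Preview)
Your proposal is correct and essentially identical to the paper's own proof: both observe that the defining condition (\ref{eqn:euclidean}) becomes, after squaring, a conjunction of polynomial strict inequalities in the $d(|V|+|A|)$ coordinate variables, and that this ETR sentence can be written down in polynomial time. You provide slightly more bookkeeping (counting constraints, noting the degree and bit-size), but the argument is the same.
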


\begin{proof}
	This is almost immediate from the definition of $d$-Euclidean preferences. Namely, a profile is $d$-Euclidean if and only if there \emph{exist} reals $x_{r,i}\in \R$ for each $r\in A\cup V$ and $i = 1,\dots,d$ such that whenever $a \pref_v b$, we have
	\begin{align*}
	& \| x_v - x_a\| < \| x_v - x_b\| \iff \sum_{i=1}^d \left(x_{v,i} - x_{a,i}\right)^2 < \sum_{i=1}^d \left(x_{v,i} - x_{b,i}\right)^2.
	\end{align*}
	Thus, the problem is equivalent to asking whether a system of polynomial inequalities has a solution. This system can be constructed in polynomial time, given the profile.
\end{proof}

This proposition in particular shows that $d$-EUCLIDEAN is \emph{decidable}, a fact that is not \emph{a priori} obvious, and to the best of our knowledge has not been previously noted.

Our starting point in the reductions is the following problem about combinatorial descriptions of hyperplane arrangements.

\problem
	{$d$-REALISABILITY}
	{a set $S \subseteq \{-,+\}^n$ of sign vectors with $(-,\dots,-),(+,\dots,+)\in S$}
	{is there an oriented $d$-hyperplane arrangement $\mathcal H$ with $S \subseteq \mathcal D(\mathcal H)$?}

\noindent	
For example, $S = \{$\scalebox{0.90}[1]{$(+,+,+,+), (-,+,+,-), (-,+,-,+), (-,+,-,-), (-,-,-,+), (-,-,-,-)$}$\}$ is 2-realised by the four red lines in \Cref{fig:euclidean}, where the red label of the line is placed on the positive side of the line.

\begin{theorem}[\textup{\citeNP{kang2012sphere}}]
	\hspace{-7pt}
	\textup{$d$-REALISABILITY} is $\ER$-complete for $d\!\ge\!2$.
\end{theorem}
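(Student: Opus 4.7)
The plan is to establish $\ER$-completeness in two stages: prove membership in $\ER$ by a direct encoding in the spirit of \Cref{prop:ETR-containment}, and prove $\ER$-hardness by reducing from a classical realisability problem in oriented matroid theory, most naturally Mnev's universality theorem (i.e.\ stretchability of pseudoline arrangements).

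For membership in $\ER$, I would write a polynomial-size ETR sentence that existentially quantifies over (i) the $d+1$ real coefficients $(c_i, b_i)$ defining each of the $n$ hyperplanes and (ii) one witness point $x_\sigma \in \R^d$ per sign vector $\sigma \in S$. The body of the sentence asserts, for every pair $(\sigma, i)$, the linear inequality $c_i^T x_\sigma > b_i$, $c_i^T x_\sigma < b_i$, or $c_i^T x_\sigma = b_i$, depending on whether $\sigma_i$ is $+$, $-$, or $0$. This sentence is satisfiable precisely when there is an oriented $d$-hyperplane arrangement $\mathcal H$ with $S \subseteq \mathcal D(\mathcal H)$, so $d$-REALISABILITY polynomial-time many-one reduces to ETR.

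For $\ER$-hardness in the base case $d=2$, I would reduce from the stretchability problem for simple pseudoline arrangements in the plane, which is $\ER$-hard by Mnev's universality theorem. Given a pseudoline arrangement, the reduction reads off one sign vector per chamber to obtain a set $S \subseteq \{-,+\}^n$; since simple arrangements have an unbounded chamber on each side of every pseudoline, both all-positive and all-negative sign vectors are automatically present, so the side constraints on $S$ are satisfied (with a small padding step if necessary). Then $S$ is $2$-realisable exactly when the pseudoline arrangement can be stretched by straight lines, giving correctness. To promote this to $d>2$, I would embed the planar instance in $\R^d$ by identifying $\R^2$ with the first two coordinates and adjoining, for each extra coordinate $j = 3,\dots,d$, a pair of "sandwich" hyperplanes $x_j = \pm M$ together with witness sign vectors forcing every realisation to collapse in those directions. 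Any $d$-realisation of the enlarged $S$ is then essentially a planar realisation of the original $S$, so stretchability reduces to $d$-REALISABILITY.

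The main obstacle will be the hardness direction, and specifically engineering $S$ so that every arrangement realising $S$ is forced to reproduce the pseudoline combinatorial structure faithfully, leaving no room for "cheap" arrangements that merge or swap chambers. The padding required to enforce the all-positive and all-negative constraints, and the dimension-lifting step for $d>2$, must be chosen compatibly with this faithfulness so as not to introduce spurious realisations. Beyond these design choices, both directions are routine: membership is a mechanical translation, and hardness is a standard application of Mnev-type universality combined with a straightforward coning construction.
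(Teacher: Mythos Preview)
Your high-level plan matches the paper's account of the Kang--M\"uller proof: membership via a direct ETR encoding, and hardness via Mn\"ev's universality theorem through SIMPLE STRETCHABILITY. The paper does not prove this theorem itself but only cites \citet{kang2012sphere} and names the source problem; the one concrete detail the paper does record (later, in the proof of \Cref{lemma:minors-realizability}) is the dimension-lifting lemma, and that is where your proposal diverges.

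Two issues. First, a small one: your justification for why the all-$+$ and all-$-$ sign vectors are ``automatically present'' is incorrect. That every pseudoline has unbounded chambers on both sides does not imply there is a single chamber lying on the positive side of \emph{all} pseudolines simultaneously. Kang--M\"uller handle this by a relabelling (reorientation) step, which the paper explicitly flags when it invokes their Theorem~10 in the proof of \Cref{lemma:minors-realizability}. Your parenthetical ``padding step'' is the right instinct, but the sentence before it is wrong.

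Second, and more substantively, your dimension-lifting via ``sandwich hyperplanes $x_j = \pm M$'' does not work as stated. Forcing the witness points into a slab $|x_j| < M$ imposes no constraint on how the original $n$ hyperplanes are oriented in the extra coordinates, so there is no evident way to extract a $2$-dimensional realisation of $S$ from a $d$-dimensional realisation of your enlarged set. The device Kang--M\"uller actually use---quoted by the paper as their Lemma~11 inside the proof of \Cref{lemma:minors-realizability}---is that $S$ is $d$-realisable if and only if $S \times \{+,-\}$ is $(d+1)$-realisable. One adds a \emph{single} extra hyperplane $h_{n+1}$ and demands, for every $\sigma \in S$, both $(\sigma,+)$ and $(\sigma,-)$. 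For the nontrivial direction, the segment joining the two witnesses for $\sigma$ crosses $h_{n+1}$ at a point that, by convexity of half-spaces, still has sign vector $\sigma$ with respect to $h_1,\dots,h_n$; restricting the whole arrangement to the $d$-dimensional affine subspace $h_{n+1}$ then yields a $d$-realisation of $S$. Your sandwich construction has no analogue of this convexity-and-restriction step, and without it the reverse implication fails.
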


\citeauthor{kang2012sphere} establish this by a reduction from SIMPLE STRETCHABILITY, the problem of deciding whether an arrangement of pseudolines can be stretched into an isomorphic arrangement of lines. That problem is $\ER$-complete by Mn{\"e}v's \citeyear{mnev1985realizability} universality theorem, a deep topological result about representing semialgebraic varieties. \citet{shor1991stretchability} gives a direct proof of NP-hardness by a reduction from SAT. \\

We are now ready to prove our main result.
	
\begin{theorem}
	\label{thm:total-etr-hard}
	\textup{$d$-EUCLIDEAN} is $\ER$-complete for each $d\ge 2$.
\end{theorem}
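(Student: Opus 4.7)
The plan is to prove $\ER$-hardness by reducing from $d$-REALISABILITY, exploiting the elementary observation already flagged in the preliminaries: the locus of points equidistant from two points $p,q\in\R^d$ is a hyperplane. A pair of alternatives in a profile can therefore encode an oriented hyperplane, with the pair's perpendicular bisector playing the role of the hyperplane, and a voter lying on its positive side exactly when the voter prefers one alternative to the other.

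Given an instance $S\subseteq\{-,+\}^n$ of $d$-REALISABILITY, I will build a profile on $2n$ alternatives $\{a_1,b_1,\dots,a_n,b_n\}$ (one pair per hyperplane) together with one voter $v_\sigma$ per $\sigma\in S$. The preferences of $v_\sigma$ rank all of pair $i$ wholly above all of pair $j$ whenever $i<j$, and within pair $i$ they set $a_i\succ b_i$ if $\sigma_i=+$ and $b_i\succ a_i$ otherwise. This is a complete strict total order on the alternative set, and the whole profile is constructible in polynomial time from $S$.

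For the forward direction, given an arrangement $\mathcal H=(h_1,\dots,h_n)$ realising $S$, pick $R>0$ so that each cell of $\mathcal H$ whose sign vector lies in $S$ contains a point of the open ball $B(0,R)$, and place each voter $v_\sigma$ at such a point. For each $i$, pick a midpoint $p_i\in h_i$ with $\|p_i\|=R_i$, where $R\ll R_1\ll R_2\ll\cdots\ll R_n$ grow on exponentially separated scales, and put $a_i=p_i+t_ic_i$, $b_i=p_i-t_ic_i$ with $t_i$ small and $c_i$ the unit normal to $h_i$ oriented so that $a_i$ sits on the $+$ side. The perpendicular bisector of $a_i,b_i$ is then exactly $h_i$ with the desired orientation, so the within-pair preference of $v_\sigma$ matches $\sigma_i$. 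Simultaneously, the exponential separation of the $R_i$ forces every voter in $B(0,R)$ to be much closer to each alternative of pair $i$ than to any alternative of pair $j>i$, yielding the across-pair ordering.

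For the backward direction, given any $d$-Euclidean embedding $x$ of the profile, let $h_i$ be the perpendicular bisector of $x(a_i)$ and $x(b_i)$, oriented so that $x(a_i)$ lies on the $+$ side. By the construction of the preferences, the sign vector of $x(v_\sigma)$ in the arrangement $(h_1,\dots,h_n)$ is exactly $\sigma$, so $S\subseteq\mathcal D(\mathcal H)$. The main obstacle is on the forward side: the embedding must simultaneously satisfy the within-pair constraints, which tie the perpendicular bisectors to the prescribed hyperplanes, and the across-pair ``scaffolding'' constraints, which impose additional distance inequalities with no analogue on the arrangement side. The exponential scale separation of the midpoints $p_i$ is precisely what decouples these two demands, so the geometry of the realising arrangement can be reused without interference.
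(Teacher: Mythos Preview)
Your proposal is correct and follows essentially the same approach as the paper: reduce from $d$-REALISABILITY, use one pair $\{a_i,b_i\}$ of alternatives per hyperplane so that their perpendicular bisector encodes $h_i$, and impose the common scaffold ordering $\{a_1,b_1\}\succ\cdots\succ\{a_n,b_n\}$ to make the reduction go through. The only cosmetic difference is in the geometric construction of the embedding: the paper keeps the midpoints $p_i$ inside the unit ball and pushes the alternatives out to distance $Ri$ along the normal (linear separation with $R>4$ suffices), whereas you push the midpoints out to exponentially separated radii $R_i$ and keep the alternatives close to them; both choices achieve the same decoupling of the within-pair and across-pair constraints, and your exponential separation is harmless overkill.
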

\begin{proof}
	We have already seen that \textup{$d$-EUCLIDEAN} is contained in $\ER$ (Proposition~\ref{prop:ETR-containment}). We now show $\ER$-hardness by a reduction from $d$-REALISABILITY. 
	
	\begin{figure}[t]
		\centering
		\scalebox{0.7}{
			\begin{tikzpicture}
			[line/.style={red!80!black, semithick},
			voter/.style={radius=0.05,black!90},
			region/.style={line width=1.13cm, black!6},
			alternative/.style={blue!75, radius=0.05, text=blue!75}]
			\clip (0,0) circle [radius=7cm];
			
			% alternative regions
			\draw[region] (0,0) circle [radius=3.4];
			\draw[region] (0,0) circle [radius=4.8];
			\draw[region] (0,0) circle [radius=6.2];
			\draw[text=black!80] (2.6,2.1) node {$R$};
			\draw[text=black!80] (3.75,3.0) node {$2R$};
			\draw[text=black!80] (4.85,3.9) node {$3R$};
			
			% coordinate system
			\draw[-stealth,thick] (-7, 0) -- (7, 0);
			\draw[-stealth,thick] (0, -7) -- (0, 7);
			
			% unit ball
			\draw (0,0) circle [radius=1];
			
			% lines
			\draw[line] (-4,-7) -- (4.2,7);
			\draw[line] (-7,3) -- (7,-3.5);
			\draw[line] (-7,-3) -- (7,2);
			\draw[line] (-2,7) -- (2.5,-7);
			
			% voters
			\fill[voter] (0.17,-0.2) circle;
			\fill[voter] (0.3,0.7) circle;
			\fill[voter] (-0.4,-0.3) circle;
			\fill[voter] (0.7, 0.2) circle;
			\fill[voter] (-0.62, 0.4) circle;
			\fill[voter] (0.23, -0.7) circle;
			
			% p-line, orthogonal corner
			\draw[line,blue,densely dotted] (-3.5,7) -- (2,-7);
			\draw[line,blue,densely dotted] (-7,-1.6) -- (7,2.4);
			
			% blue-red intersections
			\draw[blue] (-0.49, -0.67) circle [radius=0.07];
			\draw[blue] (0.12, 0.43) circle [radius=0.07];
			
			% alternatives
			% a1
			\fill[alternative] (-1.73,2.5) circle;
			\draw[alternative] (-1.5,2.8) node {$b_1$};
			%b1
			\fill[alternative] (0.697,-3.7) circle;
			\draw[alternative] (0.9,-3.45) node {$a_1$};
			% a2
			\fill[alternative] (4.25,1.62) circle;
			\draw[alternative] (4.4,1.95) node {$b_2$};
			%b2
			\fill[alternative] (-4.28,-0.83) circle;
			\draw[alternative] (-4.4,-1.15) node {$a_2$};
			% a3
			\fill[alternative] (-3.9,-5.0) circle;
			\draw[alternative] (-3.65,-5.3) node {$a_3$};
			%b3
			\fill[alternative] (2,5.5) circle;
			\draw[alternative] (1.75,5.75) node {$b_3$};
			
			% line labels
			\draw[line] (6.3,1.3) node {$h_1$};
			\draw[line] (-2.1,6.05) node {$h_2$};
			\draw[line] (5.55,-3.2) node {$h_3$};
			\draw[line] (-2.75,-5.7) node {$h_4$};
			\end{tikzpicture}}
		\caption{2-Euclidean embedding of a profile obtained from a 2-realisable sign vector set $S$ through the reduction of \Cref{thm:total-etr-hard}. Black dots represent voters $v_\sigma$, positioned within the unit ball and within the cell with sign vector $\sigma$ induced by the red hyperplanes (lines). The red labels are on the positive side of each line. Blue circles denote the points $p_i$. Blue dots correspond to alternatives; note that  $a_i$ and $b_i$ are at radius $Ri\pm 2$ from the origin.}
		\label{fig:euclidean}
	\end{figure}
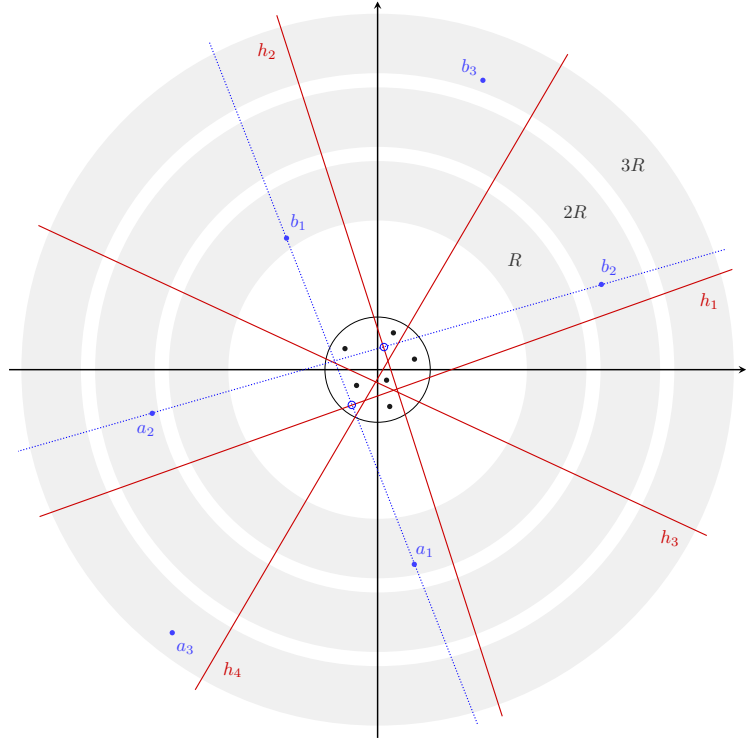
	
	Let $S \subseteq \{-,+\}^n$ be a given set of sign vectors with $(-,\dots,-),(+,\dots,+)\in S$. We construct a profile of $|S|$ votes over a total of $2n$ alternatives. Precisely, we take as alternatives the set $A = \{a_1,b_1,\dots,a_n,b_n\}$. For each $\sigma\in S$, we introduce a voter $v_\sigma$ with strict order $\pref_\sigma$ specified by
	\[ \{ a_1, b_1 \} \succ_\sigma \{a_2,b_2\} \succ_\sigma \cdots \succ_\sigma \{a_n,b_n\} \quad \text{and}\quad \begin{cases}
	a_i \succ_\sigma b_i \iff \sigma_i = +, \\
	b_i \succ_\sigma a_i \iff \sigma_i = -.
	\end{cases} \]
	This completes the description of the reduction. We now show its correctness. \\
	
	Suppose the profile constructed is $d$-Euclidean, and let $x : V\cup A \to \R^d$ be a Euclidean embedding. Take the oriented hyperplane arrangement $\mathcal H = (h_1,\dots,h_n)$ defined by
	\begin{align*}
	  h_i \: &= \{ x \in \R^d : \| x - x(a_i)\| = \| x - x(b_i)\| \}, \\
	  h_i^+ &= \{ x \in \R^d : \| x - x(a_i)\| < \| x - x(b_i)\| \}, \\
	  h_i^- &= \{ x \in \R^d : \| x - x(a_i)\| > \| x - x(b_i)\| \}.
	\end{align*}
	Then, clearly, $S\subseteq \mathcal D(\mathcal H)$: Let $\sigma\in S$ and let $i \in \{1,\dots,d\}$. If $\sigma_i = +$, we have $a_i \succ_\sigma b_i$, and thus by definition of Euclidean preferences, we must have $\| x(v_\sigma) - x(a_i) \| < \| x(v_\sigma) - x(b_i) \|$ and hence $x(v_\sigma) \in h_i^+$ so that $\sigma(x(v_\sigma))_i = + = \sigma_i$. Similarly if $\sigma_i = -$. It follows that $\sigma\in\mathcal D(\mathcal H)$. Hence $S\subseteq \mathcal D(\mathcal H)$. \\
	
	Conversely, suppose that $S\subseteq \mathcal D(\mathcal H)$ for some oriented $d$-hyperplane arrangement $\mathcal H$. By 
	%Lemma~\ref{lemma:cells-in-ball}, 
	applying an appropriate scaling map $x\mapsto \lambda x$ if needed,
	we may assume that every cell of $\mathcal H$ intersects the unit ball $B(0,1)\subseteq\R^d$. Write $\mathcal H = (h_1,\dots,h_n)$ with $h_i = \{ x\in\R^d : u_i^T x = b_i \}$, where without loss of generality $\|u_i\| = 1$, so that $u_i$ is a unit vector.  Further, we will say that $h_i^+ = \{ x : u_i^T x > b_i \}$ and $h_i^- = \{ x : u_i^T x < b_i \}$.
	
	We now construct a Euclidean embedding $x: A\cup V \to \R^d$. We start by placing the voter $v_\sigma$ corresponding to $\sigma\in S$ at an arbitrary point $x(v_\sigma) \in B(0,1)$ of the cell of $\mathcal H$ with sign vector $\sigma$. This exists by our assumption that $S\subseteq \mathcal D(\mathcal H)$. 
	
	Next, for each $i=1,\dots,d$, pick some point $p_i \in h_i \cap B(0,1)$ (this is possible because $B(0,1)$ meets both $h_i^-$ and $h_i^+$ since $(-,\dots,-),(+,\dots,+)\in S$). Following an argument by \citet{kang2012sphere}, we set for $r>0$
	\begin{align*}
	w_{i,r}^+ := p_i + r u_i \quad\text{and}\quad &w_{i,r}^- := p_i - r u_i, \\
	B_{i,r}^+ := B(w_{i,r}^+,r) \quad\text{and}\quad &B_{i,r}^- := B(w_{i,r}^-,r).
	\end{align*}
	\begin{minipage}{0.9\textwidth}
	Note that $B_{i,r}^+\subseteq h_i^+$ and $B_{i,r}^- \subseteq h_i^-$. In fact, $\bigcup_{r>0} B_{i,r}^+ = h_i^+$ and $\bigcup_{r>0} B_{i,r}^- = h_i^-$ (see figure on the right). Hence, for all $r$ sufficiently large, we have
	\begin{align*}
		&x(v_\sigma) \in B_{i,r}^+ \quad
		\text{for all $\sigma$ with $\sigma_i = +$, and} 
		%\text{for all $v_\sigma$ such that $a_i \succ_\sigma b_i$, and} 
		\\
		&x(v_\sigma) \in B_{i,r}^- \quad
		\text{for all $\sigma$ with $\sigma_i = -$.} 
		%\text{for all $v_\sigma$ such that $b_i \succ_\sigma a_i$.}
	\end{align*}
	Fix a value $R>4$ of $r$ for which this holds. We now pick the \\ positions of the alternatives in the Euclidean embedding: Set
	\end{minipage}
	\begin{minipage}{0.09\textwidth}
	%\begin{wrapfigure}[10]{l}{0pt}
		%\vspace{-20pt}
		\hspace{-58pt}
		\scalebox{0.6}{
			\begin{tikzpicture}
			[ball/.style={blue!80!black}]
			\clip (-2.5,-2.5) rectangle (2.5,2.5);
			
			%circles
			\draw[ball] (0.5,-0.5) circle [radius=0.7071cm];
			\draw[ball] (1,-1) circle [radius=1.414cm];
			\draw[ball] (2,-2) circle [radius=2.828cm];
			\draw[ball] (5,-5) circle [radius=7.07cm];
			
			% line
			\draw [very thick, red!80!black] (-3,-3) -- (3,3);
			
			%labels
			\draw (-0.2,0.2) node {$p_i$};
			\fill (0, 0) circle [radius=0.3mm];
			\draw (0.85,-0.65) node {\small $w_{i,1}^-$};
			\fill (0.5,-0.5) circle [radius=0.3mm];
			%\draw (1.7,-1.5) node {$w_{i,2}^-$};
			\end{tikzpicture}
		}
		%\vspace{5pt}
	%\end{wrapfigure}
	\end{minipage}
	\[ x(a_i) = w_{i, Ri}^+ \quad\text{and}\quad x(b_i) = w_{i, Ri}^-. \]
	We are left to verify that the map $x: A\cup V \to \R^d$ thus constructed actually corresponds to voters' preferences. First let us show that, according to the embedding $x$, every voter's preference has the form
	\[\{ a_1, b_1 \} \succ_\sigma \{a_2,b_2\} \succ_\sigma \cdots \succ_\sigma \{a_n,b_n\}. \]
	So let $v_\sigma$ be a voter, let $1 \le i < j \le n$, and let $c_i \in \{a_i,b_i\}$ and $c_j \in \{a_j, b_j\}$. Then
	\begin{align*}
		\| x(v_\sigma) - x(c_i) \| &\le \| x(v_\sigma) - p_i \| + \| p_i - x(c_i)\| \tag{triangle inequality} \\
		&\le 2 + Ri \tag{$u_i$ is a unit vector} \\
		&< Rj - 2 \tag{$j > i$ and $R > 4$}\\
		&\le \| x(c_j) - p_j \| - \| p_j - x(v_\sigma) \|  \tag{as before} \\
		&\le \| x(v_\sigma) - x(c_j) \|. \tag{reverse triangle inequality}
	\end{align*}
	Thus, it follows that $c_i \succ_\sigma c_j$, as desired.
	Finally, we need to confirm that
	\[\| x(v_\sigma) - x(a_i) \| < \| x(v_\sigma) - x(b_i) \| \iff \sigma_i = +.\]
	So suppose $\sigma_i = +$. By choice of $R$, we have $x(v_\sigma) \in B_{i,Ri}^+$, so that $\| x(v_\sigma) - x(a_i) \| < Ri$. On the other hand, we have $\| x(v_\sigma) - x(b_i) \| \ge Ri$: for suppose not. Then $x(v_\sigma) \in B_{i,Ri}^-$, and thus $x(v_\sigma) \in B_{i,Ri}^+ \cap B_{i,Ri}^- = \emptyset$, a contradiction.
	%\{p_i\}$. But $p_i$ lies \emph{on} the hyperplane $h_i$, and so $p_i$ is not part of a cell. Thus $x(v_\sigma)$ is not part of a cell, contradicting our choice of $x(v_\sigma)$. The case when $\sigma_i = -$ is similar, and this gives the result.
\end{proof}

Certainly, this hardness result implies that it is also hard to recognise $d$-Euclidean profiles of \emph{weak orders} (since strict orders form a special case). For \emph{dichotomous} orders, hardness does not follow immediately, but a similar reduction can be used. The decision problems for dichotomous preferences are defined as follows:

\problem
{$d$-DICHOTOMOUS-EUCLIDEAN}
{set $A$ of alternatives, profile $V$ of dichotomous votes over $A$}
{is $V$ $d$-DE?}

\problem
{$d$-DICHOTOMOUS-UNIFORM-EUCLIDEAN}
{set $A$ of alternatives, profile $V$ of dichotomous votes over $A$}
{is $V$ $d$-DUE?}

Perhaps unsurprisingly, the argument employed is almost identical to the hardness result for recognising unit disk graphs \cite{kang2012sphere}. 

\begin{theorem}
	Both \textup{$d$-DICHOTOMOUS-EUCLIDEAN} and \textup{$d$-DICHOTOMOUS-UNIFORM-EUCLIDEAN}  are $\ER$-complete for each $d\ge 2$.
\end{theorem}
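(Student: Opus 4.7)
My plan is to mirror the reduction of \Cref{thm:total-etr-hard} as closely as possible, replacing each strict vote $\pref_\sigma$ by the dichotomous vote with approval set $A_\sigma := \{a_i : \sigma_i = +\} \cup \{b_i : \sigma_i = -\}$. Containment in \ER{} is immediate from the same ETR-encoding as in \Cref{prop:ETR-containment}: one existentially quantifies the coordinates $x_{r,i}$ together with a radius $r_v$ per voter (in the DUE case one simply hard-codes $r_v = 1$), and expresses approval/non-approval by polynomial (in)equalities of the form $\|x_v - x_a\|^2 < r_v^2$ or $\|x_v - x_a\|^2 \ge r_v^2$.

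For hardness I would again reduce from $d$-REALISABILITY, using the alternative set $A = \{a_1, b_1, \dots, a_n, b_n\}$ and, for each $\sigma \in S$, one voter $v_\sigma$ with approval set $A_\sigma$. The ``only if'' direction is essentially free: given any $d$-DE embedding $x$ with radii $r_v$, let $h_i$ be the perpendicular bisector of $x(a_i)$ and $x(b_i)$, oriented so that $h_i^+$ contains $x(a_i)$. Whenever $\sigma_i = +$, the voter $v_\sigma$ approves $a_i$ but not $b_i$, giving $\|x(v_\sigma) - x(a_i)\| < r_{v_\sigma} \le \|x(v_\sigma) - x(b_i)\|$ and hence $x(v_\sigma) \in h_i^+$; the symmetric statement holds for $\sigma_i = -$. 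Thus $S \subseteq \mathcal D(\mathcal H)$.

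For the ``if'' direction I would reuse the geometric gadget of \Cref{thm:total-etr-hard}: rescale $\mathcal H$ so that every cell meets $B(0,1)$, place $v_\sigma$ in its cell at distance at least some uniform $\epsilon > 0$ from every hyperplane, pick $p_i \in h_i \cap B(0,1)$, and set $x(a_i) = p_i + R u_i$, $x(b_i) = p_i - R u_i$ for a parameter $R$ to be chosen. A one-line expansion shows that $\|x(v_\sigma) - x(a_i)\|^2 - R^2$ equals $\|x(v_\sigma) - p_i\|^2 \mp 2R\, u_i^T(x(v_\sigma) - p_i)$ (sign $-$ for $a_i$, $+$ for $b_i$); combined with $|u_i^T(x(v_\sigma) - p_i)| \ge \epsilon$ and $\|x(v_\sigma) - p_i\|^2 \le 4$, any $R > 2/\epsilon$ makes the squared distance to the approved alternative strictly less than $R^2$ and that to the disapproved one strictly greater than $R^2$, uniformly in $v_\sigma$ and $i$.

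The main (and essentially only) obstacle lies in the uniform case $d$-DUE, where every voter must share a single radius. This is precisely where the argument above pays off: each voter's admissible interval of squared radii contains $R^2$, so picking $r = R$ for all voters simultaneously is consistent, and rescaling the entire embedding by $1/R$ produces a $d$-DUE embedding with unit balls realising the profile. For $d$-DE one instead has the luxury of choosing any radius in each voter's interval independently, so the same construction works \emph{a fortiori}.
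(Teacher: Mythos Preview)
Your proposal is correct and follows essentially the same route as the paper: the identical reduction from $d$-REALISABILITY with approval sets $A_\sigma = \{a_i : \sigma_i = +\} \cup \{b_i : \sigma_i = -\}$, with the forward direction via perpendicular bisectors and the backward direction via the $p_i \pm R u_i$ gadget. Your choice of a \emph{single} parameter $R$ (rather than the graded radii $Ri$ used in \Cref{thm:total-etr-hard}) together with the explicit computation $\|x(v_\sigma) - x(a_i)\|^2 - R^2 = \|x(v_\sigma) - p_i\|^2 - 2R\,u_i^T(x(v_\sigma)-p_i)$ is in fact a tidy way to get the uniform radius for the DUE case directly; this is exactly the ``slightly more scaling and shifting'' the paper alludes to before deferring details to \citet{kang2012sphere}.
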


\begin{proof}[Sketch]
	The proof is similar to the previous reduction. We again reduce from $d$-REALISABILITY. The same reduction works for both DE and DUE. Let $S \subseteq \{-,+\}^n$ be a given set of sign vectors with $(-,\dots,-),(+,\dots,+)\in S$. We construct a profile of $|S|$ dichotomous votes over the same set of alternatives $A = \{a_1,b_1,\dots,a_n,b_n\}$ as before. For each $\sigma\in S$, the voter $v_\sigma$ has dichotomous preferences approving the set
	\[ v_\sigma = \{ a_i : i = 1,\dots,n \text{ with } \sigma_i = + \} \cup \{ b_i : i = 1,\dots,n \text{ with } \sigma_i = - \}.\]
	This completes the description of the reduction.
	
	Correctness can be established using almost the same argument as in \cite[Theorem~1]{kang2012sphere}; we omit the details here for lack of space. Compared to the argument for \Cref{thm:total-etr-hard}, we need to do slightly more scaling and shifting.
\end{proof}

\section{Precision}
\label{sec:precision}

In this section, we consider the question of how many bits are needed to specify a Euclidean embedding $x : A \cup V \to \R^d$. We only consider the `natural encoding' where the coordinates of each point are given as rational numbers. Note that if every $d$-Euclidean profile were to admit an embedding that can be specified in polynomially many bits, then this would put the problem $d$-EUCLIDEAN in NP. Yet in this section we show that there is a family of profiles which need exponentially many bits in order to specify any Euclidean embedding. This result, by itself, does not rule out that the \emph{decision} problem $d$-EUCLIDEAN is in NP: there could be a `clever' way to certify that an embedding exists, without explicitly giving the embedding (finding such a `clever' certificate would prove $\text{NP} = \ER$, which in the words of \citet{kang2012sphere} would constitute a ``minor breakthrough in complexity theory''). On the other hand, our result shows that the \emph{function} problem associated with the problem $d$-EUCLIDEAN is \emph{provably} not in P.

Let us now make precise the notion of the \emph{size} of an embedding $x : A \cup V \to \R^d$. Here, we follow the definitions of \citet{mcdiarmid2013integer}. The number of bits needed to store a natural number $n\in\mathbb N$ is the number of digits in its binary representation: $\size(n) := \lceil \log_2(n+1) \rceil$. To represent an integer $k\in\mathbb Z$, we need an extra bit to store its sign: $\size(k) := 1 + \size(|k|)$. Finally, we represent a rational number $q\in\mathbb Q$ as a pair of integers representing a fraction: if $q = m/n$, where $m,n\in\mathbb Z$ are relatively prime, we set $\size(q) := \size(m) + \size(n)$. The size of a rational vector $x \in \mathbb Q^d$ is $\size(x) := \sum_{i=1}^d \size(x_i)$. Then, the size of a \emph{rational} Euclidean embedding $x : A \cup V \to \mathbb Q^d$ is defined as
\[ \size(x) := \sum_{r \in A \cup V} \size(x(r)). \]

Before we establish the promised lower bound, let us first give a corresponding upper bound. Namely, while some $d$-Euclidean profiles require exponentially many bits to specify, (single-)exponentially many bits are always enough. To see this, we will first need a guarantee that every $d$-Euclidean profile admits a rational embedding, because we have only assigned sizes to rational embeddings.

\begin{theorem}
	\label{thm:upper-bound}
	Every $d$-Euclidean profile admits a rational embedding. Further, for each $d \ge 1$, there is a constant $c = c(d)$ such that any $d$-Euclidean profile with $n$ voters and $m$ alternatives admits a rational embedding $x : A \cup V \to \mathbb Q^d$ with $\size(x) \le 2^{c(n+m)}$. 
\end{theorem}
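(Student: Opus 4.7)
The plan is to interpret the $d$-Euclidean condition as the solvability of a small system of strict polynomial inequalities and then invoke classical singly-exponential complexity bounds from real algebraic geometry. Following the proof of \Cref{prop:ETR-containment}, a profile $V$ is $d$-Euclidean iff there exists $x \in \R^N$ (with $N := d(n+m)$) satisfying at most $n \binom{m}{2}$ strict polynomial inequalities, each of degree $2$ in the $N$ unknowns and with integer coefficients of bit-size $O(1)$. The feasible set $F \subseteq \R^N$ is therefore a non-empty open semialgebraic set, so in particular it contains rational points; this already proves the first assertion.

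For the quantitative bound, I would appeal to the classical theorem on sample points in connected components of semialgebraic sets, as developed by Grigoriev and Vorobjev and refined by \citet{renegar1992computational} and others: for a system of $s$ polynomial inequalities of degree at most $D$ in $N$ variables with integer coefficients of bit-size at most $L$, every non-empty connected component of the realisation contains a point whose coordinates are algebraic numbers representable in bit-size $(sD)^{O(N)} L^{O(1)}$. Applied with $s = O(n m^2)$, $D = 2$, and $L = O(1)$, this yields an algebraic embedding $x^\star$ of total bit-size at most $2^{c_0(n+m)}$ for some $c_0 = c_0(d)$.

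To turn $x^\star$ into a rational embedding of comparable size, I would perturb it. After a rescaling that places $x^\star$ inside $[-1,1]^N$, the minimum slack $\delta$ of any of the strict inequalities at $x^\star$ is a positive algebraic number, and classical root-separation bounds for polynomials with integer coefficients of singly-exponential bit-size imply $\delta \ge 2^{-2^{c_1(n+m)}}$ for some $c_1 = c_1(d)$. Each squared-distance polynomial is Lipschitz on the unit cube with constant polynomial in $N$, so rounding every coordinate of $x^\star$ to a nearest dyadic rational of accuracy $\varepsilon$ with $\log(1/\varepsilon) = 2^{O(n+m)}$ preserves every strict inequality. The resulting rational embedding has total size at most $2^{c(n+m)}$ for a suitable $c = c(d)$.

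The hardest part of making this rigorous is the separation estimate for $\delta$: since $\delta$ is a compound algebraic quantity depending on all $N$ of the algebraic coordinates of $x^\star$, the singly-exponential lower bound must be derived by carefully combining the bit-complexity guarantee on $x^\star$ with standard bounds on the values of integer-coefficient polynomials at algebraic arguments (see, e.g., the treatment in Basu, Pollack and Roy's \emph{Algorithms in Real Algebraic Geometry}). Once that estimate is in place, the Lipschitz/rounding step is routine and the claimed bound $\size(x) \le 2^{c(n+m)}$ follows, with the constant $c$ absorbing all dependencies on the ambient dimension $d$.
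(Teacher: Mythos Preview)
Your argument is correct, but it takes a detour that the paper avoids entirely. The paper simply invokes a theorem of \citet{basu1996combinatorial} stating that any satisfiable system $\{p(x)>0:p\in\mathcal P\}$ of strict polynomial inequalities (degree $\le D$, integer coefficients of bit-size $\le\tau$, in $N$ variables) has a \emph{rational} solution of bit-size at most $|\mathcal P|\cdot\tau\cdot D^{CN}$. Since the $d$-Euclidean condition is such a system with $D=2$, $\tau=O(1)$, and $N=d(n+m)$, the bound $2^{c(n+m)}$ is immediate---no perturbation step is needed.

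Your route instead obtains an \emph{algebraic} sample point from Grigoriev--Vorobjev/Renegar-type bounds and then rounds it to a nearby rational, which forces you to control the minimum slack $\delta$ via root-separation bounds. This works, and your estimate $\log(1/\delta)\le 2^{O(n+m)}$ is of the right order, but you are essentially re-deriving part of the Basu--Pollack--Roy result by hand. The advantage of your approach is that it makes the mechanism (open set, Lipschitz stability, rounding) explicit rather than hiding it in a black box; the cost is the extra page of argument, and in particular the separation estimate you flag as ``the hardest part'' is exactly what the cited theorem packages for you. Note also that your first sentence already proves the qualitative claim (non-empty open semialgebraic set $\Rightarrow$ rational points by density), which the paper does not isolate separately.
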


This theorem is an essentially immediate corollary of the following general result about the bit sizes of solutions to polynomial inequalities.

\begin{theorem}[\textup{\citeNP{basu1996combinatorial}}]
	\label{thm:poly-systems}
	Fix $d,\tau\in\mathbb N$. There is a constant $C = C(d, \tau)$ such that for all sets $\mathcal P$ of polynomials in $n$ variables of degree at most $d$ and with integer coefficients of bit size at most $\tau$, we have that whenever the system $\{p(x) > 0 : p\in\mathcal P \}$ has a real solution, then it has a rational solution of bit size at most $|\mathcal P| \cdot \tau \cdot d^{Cn}$.
\end{theorem}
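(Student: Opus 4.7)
The plan is to adapt the standard two-step pipeline from effective real algebraic geometry: first, exhibit a real \emph{algebraic} point $\alpha \in S = \{x\in\R^n : p(x) > 0 \text{ for all } p\in\mathcal P\}$ whose defining data have controlled degree and coefficient bit size; then round $\alpha$ to a nearby \emph{rational} solution using effective root-separation bounds. Both steps track the parameters $|\mathcal P|$, $\tau$ and $d^{O(n)}$ explicitly.

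For the first step, assume $S \neq \emptyset$ and intersect with a ball $\{\|x\|^2 \leq M\}$, where $M\in\mathbb N$ is chosen large enough that the intersection is still non-empty. Classical Cauchy-type bounds on the diameter of semi-algebraic sets give such an integer $M$ whose bit size is $\tau \cdot d^{O(n)}$. Next I would run the critical-point / sample-point construction of the Basu--Pollack--Roy framework: replace each strict inequality $p_i > 0$ by a closed $p_i \geq \delta$ with $\delta$ an infinitesimal slack, making the set compact, and minimize $\sum x_j^2$ on it. The minimizer satisfies a Lagrangian polynomial system of degrees $\leq d$ in $n + |\mathcal P| + 1$ unknowns, which is zero-dimensional and by B\'ezout has $\leq d^{O(n)}$ complex solutions. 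Eliminating $\delta$ by specializing it to a sufficiently small rational (whose bit size is again controlled by a Cauchy bound on the values of $p_i$) and projecting to the $x$-coordinates yields an algebraic witness $\alpha \in S$ described by a rational univariate representation: polynomials $f, g_0, g_1, \dots, g_n \in \mathbb Z[T]$ of degree at most $d^{O(n)}$ and integer coefficients of bit size at most $|\mathcal P| \cdot \tau \cdot d^{O(n)}$, together with a distinguished real root $t_0$ of $f$ such that $\alpha_j = g_j(t_0)/g_0(t_0)$.

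For the second step, observe that each value $p_i(\alpha)$ is itself a rational expression in $t_0$ of controlled height. By Mahler/Cauchy separation, any non-zero such algebraic number has magnitude at least $2^{-L}$ for some $L = |\mathcal P| \cdot \tau \cdot d^{O(n)}$; since by construction $p_i(\alpha) > 0$ for every $i$, all these values are at least $2^{-L}$. Using interval arithmetic on the univariate representation of $\alpha_j$ (isolating $t_0$ among the roots of $f$ to binary precision $L'$), I can extract a rational approximation $q_j \in \mathbb Q$ with $|q_j - \alpha_j| < 2^{-L'}$ and total bit size $|\mathcal P| \cdot \tau \cdot d^{Cn}$. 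Choosing $L'$ slightly larger than $L$, the Lipschitz constant of each $p_i$ on a ball of radius $\|\alpha\| + 1$ (bounded in terms of $d$, $\tau$ and the coefficient sizes of $p_i$) guarantees $p_i(q) > 0$ for all $i$, producing the desired rational solution.

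The main obstacle is the first step: carrying out the critical-point construction while \emph{explicitly} tracking degrees and integer bit sizes through the introduction and subsequent elimination of the infinitesimal slack variable $\delta$ (a standard but delicate specialization argument, since one needs $\delta$ small enough to preserve emptiness of the perturbed system yet of polynomial bit size itself). The second step is more routine, amounting to a careful application of root-separation inequalities together with quantitative continuity estimates for polynomials with bounded-height integer coefficients.
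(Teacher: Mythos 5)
You should first be aware that the paper contains no proof of this statement: it is imported as a black box from \citet{basu1996combinatorial} (hence the bracketed attribution in the theorem header) and is used only to deduce the single-exponential upper bound on embedding sizes in \Cref{thm:upper-bound}. So there is no in-paper argument to compare against; what you have written is a sketch of the external theorem itself. That said, your sketch does follow the architecture of the actual source: confine the solution set to a ball of controlled radius, perturb the strict inequalities by an infinitesimal to gain compactness, locate a sample point by the critical point method, describe it by a rational univariate representation of degree $d^{O(n)}$ and controlled height, and round to a rational point via root-separation and Lipschitz estimates. The second (rounding) step is indeed routine once the first is in place, and it is legitimate here because the set $\{x : p(x)>0 \text{ for all } p\in\mathcal P\}$ is open.

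The concrete gap is in your degree count for the first step. A Lagrangian system in $n + |\mathcal P| + 1$ unknowns with equations of degree at most $d$ has a B\'ezout bound of order $d^{O(n+|\mathcal P|)}$, not $d^{O(n)}$; taken literally, your construction makes the height of the algebraic witness exponential in $|\mathcal P|$, which destroys the claimed bound $|\mathcal P|\cdot\tau\cdot d^{Cn}$ (linear in $|\mathcal P|$). The whole point of the Basu--Pollack--Roy analysis is to decouple the combinatorial parameter $|\mathcal P|$ from the algebraic parameter $d^{O(n)}$: a minimiser of $\sum_j x_j^2$ on the perturbed closed set lies on the zero sets of at most $n$ of the perturbed constraints, so one applies the critical point method to subsystems of at most $n$ polynomials at a time; the enumeration over subsystems contributes only a combinatorial factor, while each individual zero-dimensional system has $d^{O(n)}$ solutions of height $\tau\, d^{O(n)}$. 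With that repair --- together with the careful specialisation of the infinitesimal $\delta$ to a rational of controlled bit size, which you correctly identify as the delicate point --- your outline matches the proof in the cited work. For the purposes of this paper, however, none of this is needed: the theorem is invoked, not proved.
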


\begin{proof}[of \Cref{thm:upper-bound}]
	From Proposition~\ref{prop:ETR-containment}, we know that $d$-Euclidean embeddings are precisely the solutions to a certain system of strict polynomial inequalities, in which all polynomials have degree 2, and all coefficients are at most 2 in absolute value. So we can apply \Cref{thm:poly-systems} to obtain our result.
\end{proof}

The upper bound of \Cref{thm:upper-bound} is not tight for the case $d=1$. Recall from \Cref{subsec:one-dimension} that there are polynomial-time algorithms for recognising the 1-Euclidean domain. These work through a (non-trivial) reduction to linear programming. The linear programs produced in this reduction have polynomially bounded coefficients (in fact, bounded by 2), and are thus \emph{combinatorial} linear programs, and so admit a \emph{strongly polynomial} algorithm \cite{tardos1986strongly}. Thus, we can say the following:

\begin{proposition}
	Every 1-Euclidean profile with $n$ voters and $m$ alternatives admits a rational embedding $x : A \cup V \to \mathbb Q$ with $\size(x)$ bounded by $\operatorname{poly}(n,m)$. 
\end{proposition}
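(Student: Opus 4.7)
The plan is to bootstrap off the existing 1-Euclidean recognition algorithms. By the results of \citet{doignon1994polynomial} or \citet{elkind2014recognizing}, given a 1-Euclidean profile $V$ over $A$ we can, in polynomial time, compute (a) a single-crossing order of the voters and (b) an ordering of the alternatives which together determine a \emph{combinatorial skeleton} of any valid 1-Euclidean embedding. Given this skeleton, the existence of a 1-Euclidean embedding is equivalent to the feasibility of an explicit linear program in $n+m$ real variables (the coordinates $x(v), x(a)$) whose constraints express, for each voter $v$ and each adjacent pair of alternatives $a,b$ in $v$'s ranking, an inequality of the form $(x(v) - x(a))^2 < (x(v) - x(b))^2$, which on expansion becomes \emph{linear} after dividing by $2(x(b)-x(a))$, the sign of which is known from the combinatorial skeleton. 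Each such linear inequality has integer coefficients in $\{-2,-1,0,1,2\}$.

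Now I would apply standard polyhedral theory to this LP. Since it has $O(n+m)$ variables, $O(nm)$ constraints, and integer coefficients bounded by $2$ in absolute value, whenever it is feasible it admits a basic feasible solution whose coordinates are, by Cramer's rule, ratios of determinants of $(n+m)\times(n+m)$ submatrices of the constraint matrix (after suitably relaxing strict inequalities to non-strict ones by a small integer perturbation on the right-hand side, which preserves combinatorial coefficient bounds). By Hadamard's inequality, each such determinant has absolute value at most $(2\sqrt{n+m})^{n+m}$, so each coordinate of the basic feasible solution is a rational of bit size $O((n+m)\log(n+m))$. Summing over the $n+m$ points in the embedding gives $\size(x) = O((n+m)^2 \log(n+m)) = \operatorname{poly}(n,m)$.

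The only subtle point is handling the \emph{strict} inequalities required by our definition of 1-Euclidean preferences. This is routine: one perturbs the LP by replacing each strict inequality $f(x) < g(x)$ by $g(x) - f(x) \ge 1$ after globally rescaling coordinates, which preserves integer coefficients of bounded magnitude; feasibility is preserved since any strictly feasible solution may be scaled up to satisfy the margin-$1$ version. Alternatively, and more elegantly, one appeals directly to \citet{tardos1986strongly}, whose strongly polynomial algorithm for combinatorial LPs (those with coefficients of polynomial magnitude) outputs a rational basic solution using only a number of arithmetic operations bounded by a polynomial in $n+m$; the output therefore has polynomial bit size. The main obstacle, which is really only expository, is verifying that the reduction of \citet{doignon1994polynomial} or \citet{elkind2014recognizing} really does produce an LP with coefficients bounded by a constant; once this is granted, the bit-size bound follows immediately from the vertex characterisation above.
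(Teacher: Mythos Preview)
Your proof is correct and follows essentially the same approach as the paper: both invoke the known 1-Euclidean recognition algorithms to reduce to a linear program with coefficients bounded by~2, and then conclude that a polynomial-size rational solution exists. The paper's argument is terser, appealing only to \citet{tardos1986strongly}, whereas you additionally spell out the more direct Cramer/Hadamard vertex bound and the handling of strict inequalities via the margin-$1$ trick; these are useful elaborations but not a different route.
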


For the lower bound, we use techniques developed by \citet{mcdiarmid2013integer} and \citet{kang2012sphere} and apply them to the reduction of \Cref{thm:total-etr-hard}. For a profile $V$ over alternative set $A$, we define $\|V\| := |V|+|A|$.

\begin{theorem}
	Fix $d\ge 2$. For a $d$-Euclidean preference profile $V$, let $e(V)$ denote the minimum size of a rational Euclidean embedding of $V$. For each $m\ge 1$, let $e(n+m)$ be the maximum $e(V)$ among $d$-Euclidean preference profiles $V$ with $\|V\| = n+m$. Then $e(n+m) \ge 2^{\Omega(n+m)}$.
\end{theorem}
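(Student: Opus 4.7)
The plan is to transport a known bit-complexity lower bound for realising hyperplane arrangements through the reduction of Theorem~\ref{thm:total-etr-hard}. Specifically, \citet{mcdiarmid2013integer} and \citet{kang2012sphere}, building on work of Goodman, Pollack, and Sturmfels on realising order types and stretching pseudoline arrangements, exhibit a family of sign-vector sets $(S_n)_{n\ge 1}$ with $S_n \subseteq \{-,+\}^n$, containing $(-,\dots,-)$ and $(+,\dots,+)$, of cardinality $|S_n| = O(n)$, such that $S_n$ is $d$-realisable but every realising arrangement $\mathcal H = (h_1,\dots,h_n)$ requires at least one coefficient of bit-size $2^{\Omega(n)}$. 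The first step is to invoke such a family (or extract it directly from these references).

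Given $S_n$, apply the construction of Theorem~\ref{thm:total-etr-hard} to obtain a $d$-Euclidean preference profile $V_n$ with $|S_n|$ voters over $2n$ alternatives; in particular $\|V_n\| = O(n)$. The heart of the argument is to turn any small embedding of $V_n$ into a small realisation of $S_n$. Suppose $x : A\cup V \to \mathbb{Q}^d$ is a rational Euclidean embedding of size $s$. The forward direction of the correctness proof of Theorem~\ref{thm:total-etr-hard} already associates to $x$ an oriented arrangement $\mathcal H_x = (h_1,\dots,h_n)$ with $S_n \subseteq \mathcal D(\mathcal H_x)$, where $h_i$ is the perpendicular bisector of the pair $x(a_i), x(b_i)$. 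Explicitly,
\[ h_i = \bigl\{\, y \in \R^d : 2(x(b_i) - x(a_i))^T y = \|x(b_i)\|^2 - \|x(a_i)\|^2 \,\bigr\}. \]
Each coefficient on both sides is obtained from the coordinates of $x(a_i)$ and $x(b_i)$ by a bounded number of additions, subtractions, and multiplications, so standard estimates on rational arithmetic bound its bit-size by $O(s)$, with the hidden constant depending only on $d$. Thus $\mathcal H_x$ is specified by coefficients of total bit-size $O(ns)$.

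Combining the two halves, if $s$ were smaller than $2^{\Omega(n)}$ then $\mathcal H_x$ would give a realisation of $S_n$ whose coefficients all have bit-size below the step-one lower bound --- a contradiction. Therefore $s \ge 2^{\Omega(n)} = 2^{\Omega(\|V_n\|)}$, giving the claimed bound. The main obstacle, and the only delicate point, is step one: one must check that the hard-to-realise arrangement family comes with a \emph{linear-size} witness set of sign vectors, so that the blow-up $\|V_n\| = O(|S_n| + n)$ in the reduction does not weaken the exponent. The McDiarmid--M\"uller construction is already organised in this way (the certifying cells correspond to finitely many gadget vertices), and as a fallback one can take the Goodman--Pollack--Sturmfels perturbed configuration and include only the $O(n)$ cells needed to pin down the order type; the classical exponential precision bound itself is then applied verbatim.
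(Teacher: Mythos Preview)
Your approach is essentially the paper's: transport the McDiarmid--M\"uller / Kang--M\"uller precision lower bound for realising arrangements through the reduction of Theorem~\ref{thm:total-etr-hard}, recovering a realising arrangement from any rational embedding via the perpendicular bisectors of the pairs $(x(a_i),x(b_i))$. The one ingredient you omit is padding. Your family $(V_n)$ only hits certain values of $\|V\|$, so as stated you obtain $e(\|V_n\|) \ge 2^{\Omega(\|V_n\|)}$ along a subsequence; to conclude $e(k)\ge 2^{\Omega(k)}$ for \emph{every} $k$ you need to extend a hard profile to any larger size without decreasing its minimum embedding size. The paper does this by appending fresh alternatives at the bottom of every voter's order (which can be placed arbitrarily far from the voters in any embedding, so restricting an embedding of the padded profile yields one of the original); this is the direct analogue of the isolated-vertex padding that Kang--M\"uller use for unit disk graphs, and is the only adaptation the paper singles out.
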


\begin{proof}[Sketch]
	\citet{mcdiarmid2013integer} construct a family of combinatorial descriptions of line arrangements that have doubly-exponential `\emph{span}', which (roughly) corresponds to the size of the numbers needed to represent any realisation of the combinatorial description. \citet{kang2012sphere} generalise this construction to $d$-dimensional hyperplane arrangements. They then transform these line arrangements into unit disk graphs in the proof of their Theorem 3. That proof goes through almost verbatim for our case (using the reduction of our \Cref{thm:total-etr-hard}). At one step in the proof, \citeauthor{kang2012sphere} need to introduce isolated vertices to pad the graph in question. In our setting, we may introduce new alternatives and add them to the bottom of each voter's preference list; in a Euclidean embedding we just place these alternatives far away from all the voters.
\end{proof}

\section{Forbidden Minor Characterisations}
\label{sec:minors}

Consider a possible characterisation of the $d$-Euclidean domain by a set $\mathcal S$ of forbidden configurations. We will call this characterisation \emph{good} if the set $\mathcal S$ is polynomial-time recognisable: that is, given a profile, there should be a polynomial-time algorithm deciding whether the given profile is one of the configurations contained in $\mathcal S$. Certainly, if $\mathcal S$ were \emph{finite}, then $\mathcal S$ provides a good characterisation. However, there exist infinite characterisations that are still good in this sense, for example for interval graphs \cite{lekkeikerker1962representation} and matrices with the consecutive ones property \cite{tucker1972structure}.
(It could be argued that in order for $\mathcal S$ to be good it needs to be recognisable in LOGSPACE or another complexity class below P.)

However, given the complexity result of \Cref{sec:recognition}, it is a straightforward observation that for each $d \ge 2$, no good characterisation by forbidden substructures will exist for the $d$-Euclidean domain, subject to a reasonable complexity-theoretic assumption.

\begin{proposition}
	For each $d\ge 2$, the set of $d$-Euclidean preference profiles does not admit a good characterisation by forbidden substructures unless $\ER \subseteq \textup{coNP}$.
\end{proposition}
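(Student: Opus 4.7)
The plan is to contrapose: I will show that if the $d$-Euclidean domain admitted a good forbidden-substructure characterisation, then the (complement of the) recognition problem would lie in NP, i.e.\ $d$-EUCLIDEAN $\in$ coNP, which together with $\ER$-hardness from \Cref{thm:total-etr-hard} yields $\ER\subseteq\textup{coNP}$.

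So suppose $\mathcal S$ is a set of forbidden configurations that characterises the $d$-Euclidean domain, and that membership in $\mathcal S$ can be decided in polynomial time. Since the $d$-Euclidean domain is hereditary, a profile $V$ is \emph{not} $d$-Euclidean if and only if there exists some $W\in\mathcal S$ such that $V$ contains $W$. I use this $W$, together with data witnessing the containment, as a succinct certificate of non-$d$-Euclideanness. The crucial size observation is that if $V$ contains $W$, then $W$ is obtained from $V$ by deleting voters and alternatives and then relabelling; in particular $\|W\|\le \|V\|$, so $W$ has size polynomial in that of $V$.

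The coNP verifier, on input $V$ and candidate certificate $(W, \iota_A, \iota_V)$ where $\iota_A$ maps the alternatives of $W$ injectively into $A$ and $\iota_V$ maps the voters of $W$ injectively into $V$, performs two checks: (i) it verifies that $W\in\mathcal S$, which by hypothesis runs in polynomial time; and (ii) it verifies that the restriction of $V$ to $\iota_A$-image alternatives and $\iota_V$-image voters, read through the bijections $\iota_A,\iota_V$, agrees with $W$. Step (ii) is a direct comparison of preference relations and is trivially polynomial. Thus $d$-EUCLIDEAN $\in \textup{coNP}$. Combined with $\ER$-hardness of $d$-EUCLIDEAN (\Cref{thm:total-etr-hard}), we have $\ER \subseteq \textup{coNP}$, which proves the contrapositive.

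There is really no hard step here; the argument is a clean complexity-theoretic unpacking. The one point worth being careful about is the definition of ``contains'' from \Cref{sec:preliminaries}: because containment is witnessed by merely selecting subsets and relabelling, the witness is polynomial in $\|V\|$, and there is no hidden combinatorial explosion. If one preferred a stronger conclusion such as $\textup{NP}=\ER$, one would need an even more restrictive notion of ``good'' (for example, one that makes containment itself polynomially checkable without supplying the bijections); but for the stated proposition the argument above suffices.
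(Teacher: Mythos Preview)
Your argument is correct and follows essentially the same route as the paper's proof: assume a good characterisation $\mathcal S$ exists, then nondeterministically guess a subprofile together with a relabelling and verify membership in $\mathcal S$, placing $d$-EUCLIDEAN in coNP; combined with the $\ER$-hardness from \Cref{thm:total-etr-hard}, this yields $\ER\subseteq\textup{coNP}$. Your write-up is somewhat more explicit than the paper's (you spell out the certificate $(W,\iota_A,\iota_V)$ and the polynomial size bound), but the underlying idea is identical.
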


Note that $\ER \subseteq \textup{coNP}$ would imply $\textup{NP} \subseteq \textup{coNP}$, itself a rather unlikely event.

\begin{proof}
	Suppose a good characterisation by $\mathcal S$ exists. We give a coNP-algorithm that recognises $d$-Euclidean preferences: Given an input profile, guess some subprofile, guess a relabeling of voters and alternatives in this subprofile, and check whether the result is contained in $\mathcal S$.
\end{proof}

By a similar argument, no \emph{finite} characterisation can exist unless $\textup{P} = \ER = \textup{NP}$. In the remainder of this section, we prove this weaker result \emph{without} appealing to any complexity-theoretic assumptions. To do this, we use a connection between the theory of arrangements of hyperplanes and the theory of \emph{ordered matroids}.

\begin{theorem}[\textup{\citeNP{bokowski1989infinite}}]
	\label{thm:matroids}
	There exist infinitely many nonrealisable uniform oriented matroids of rank 3 such that every proper minor of them is realisable. In particular, by the Topological Realisation Theorem, for every $n_0\in \mathbb N$, there exists a non-stretchable simple pseudoline arrangement with $n > n_0$ lines such that removing any line results in a stretchable arrangement.
\end{theorem}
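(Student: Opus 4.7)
The plan is to reduce the statement about oriented matroids to the corresponding statement about pseudoline arrangements and work entirely in that picture. By the Topological Representation Theorem of Folkman--Lawrence, uniform oriented matroids of rank $3$ correspond bijectively (up to isomorphism) to simple pseudoline arrangements in the projective plane; under this bijection, matroid minors correspond to deletions of pseudolines, and realisability corresponds to stretchability. So it is enough to exhibit infinitely many minor-minimal non-stretchable simple pseudoline arrangements, i.e., arrangements which are not stretchable but become stretchable after removal of any single pseudoline.

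My starting point would be a single well-known minor-minimal obstruction, namely the \emph{non-Pappus} pseudoline arrangement on nine elements, obtained by taking the incidences prescribed by Pappus's theorem and deliberately flipping the final triangle of intersection points, so that no straight-line realisation exists. Removing any one of the nine pseudolines dissolves the Pappus incidence obstruction, and an elementary coordinate argument then shows each such deletion is stretchable, producing one minor-minimal seed. To turn this into an infinite family, I would look for an operation that grows a minor-minimal non-stretchable arrangement by a single pseudoline while preserving both non-stretchability and minor-minimality. A natural candidate is an iterated generic single-element extension in the sense of Lawrence: non-stretchability is preserved because the original Pappus sub-arrangement is still present, and deleting the newly added line simply returns the (minor-minimal) predecessor.

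The hard part will be verifying the remaining minor-minimality condition, that deleting any of the original lines in the enlarged arrangement yields a stretchable arrangement. The worry is that a superficially generic extension might secretly encode a second incidence obstruction that only activates after certain deletions. I would address this by making the extension combinatorially explicit enough that every deletion of an original line becomes a single-element extension of a known-stretchable arrangement, and then promoting a stretching of the latter to a stretching of the former by a coordinate continuity argument. An alternative route, which is closer in spirit to the Bokowski--Sturmfels construction, is to avoid iteration entirely and instead display an explicit infinite list of combinatorially distinct minor-minimal seeds, drawn from the rich supply of non-realisable projective incidence configurations (variants of non-Pappus, non-Desargues, and related configurations specialised to rank $3$), verifying minor-minimality case by case. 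In either approach, the chief technical step is the combinatorial verification that \emph{every} single-element deletion destroys the obstruction; the topological translation to pseudoline arrangements and the deduction of the "in particular" clause from the main statement are then immediate.
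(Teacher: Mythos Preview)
First, note that the paper does not prove this theorem at all: it is quoted as a black-box result of \citet{bokowski1989infinite} and then used as input to Lemma~\ref{lemma:minors-realizability}. So there is no ``paper's own proof'' to compare against; the relevant benchmark is the original Bokowski--Sturmfels construction.

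Your primary approach has a genuine gap. You propose to start from the non-Pappus arrangement and repeatedly perform a generic single-element extension. But observe what happens after one such step: you obtain a $10$-line arrangement $A$ containing non-Pappus as a sub-arrangement. Deleting the \emph{newly added} line from $A$ returns precisely the non-Pappus predecessor, which is non-stretchable. Hence $A$ has a non-stretchable proper minor and is therefore \emph{not} minor-minimal. You even state this deletion fact yourself (``deleting the newly added line simply returns the (minor-minimal) predecessor'') without noticing that it immediately destroys minor-minimality of the extension. No amount of care in the ``hard part'' you identify (deletions of original lines) can rescue this, because the obstruction already sits at the new line. More generally, any scheme that nests a fixed non-stretchable seed inside larger and larger arrangements will fail for the same reason: minor-minimal obstructions cannot contain one another.

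Your fallback suggestion --- an explicit infinite list of combinatorially distinct minor-minimal seeds --- is in fact what Bokowski and Sturmfels do. They do not iterate from non-Pappus; rather, they write down a concrete one-parameter family of rank-$3$ chirotopes, certify non-realisability of each member via an algebraic obstruction (a final polynomial), and verify case by case that every single-element deletion is realisable. If you want to turn your proposal into an actual proof, that is the direction to pursue; the iterative-extension idea should be abandoned.
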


To prove our result about $d$-Euclidean preferences, we will use the examples from \Cref{thm:matroids} and apply to them the chain of many-one reductions that yielded the hardness result of \Cref{thm:total-etr-hard}. We will present this argument in several lemmas.

\begin{lemma}
	\label{lemma:minors-realizability}
	Fix $d \ge 2$. For every $n_0\in\mathbb N$, there exists $n>n_0$ and a set $S\subseteq \{-,+\}^n$ with $(+,\dots,+),(-,\dots,-)\in S$ such that $S$ is \emph{not} $d$-realisable, but for each $i = 1,\dots,n$, the set $S_{-i} = \{ s_{-i} : s \in S \}$ obtained by deleting coordinate $i$ \emph{is} $d$-realisable.
\end{lemma}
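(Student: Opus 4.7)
My plan is to translate Theorem~\ref{thm:matroids} directly into the sign-vector language. For $d = 2$, apply Theorem~\ref{thm:matroids} to obtain, for any $n_0$, a simple non-stretchable pseudoline arrangement $\mathcal P = (p_1, \dots, p_n)$ with $n > n_0$ pseudolines such that every proper subarrangement $\mathcal P - p_i$ is stretchable. Orient each pseudoline; since a simple pseudoline arrangement has $2n$ unbounded cells grouped in antipodal pairs, after possibly reversing some orientations we may assume that two opposite unbounded cells receive the sign vectors $(+,\dots,+)$ and $(-,\dots,-)$. Take $S := \mathcal D(\mathcal P) \subseteq \{-,+\}^n$, which by construction contains the two required extremal sign vectors.

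I would then verify non-realisability by a counting argument: $|S| = \binom{n}{2}+n+1$ equals the maximum number of cells attainable by $n$ lines in $\R^2$, so any line arrangement $\mathcal H$ with $S \subseteq \mathcal D(\mathcal H)$ must itself be simple and in general position with $\mathcal D(\mathcal H) = S$---that is, a stretching of $\mathcal P$, contradicting Theorem~\ref{thm:matroids}. For the minors, removing pseudoline $p_i$ merges exactly those pairs of cells that lay on opposite sides of $p_i$, so $S_{-i} = \mathcal D(\mathcal P - p_i)$, and this is 2-realisable by hypothesis. The extremal sign vectors survive coordinate deletion trivially.

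For $d > 2$, the counting argument no longer suffices, since arrangements of $n$ hyperplanes in $\R^d$ admit strictly more than $\binom{n}{2}+n+1$ cells and the same $S$ may become $d$-realisable through the extra dimensions. My plan for the general case is either to augment $S$ with $d-2$ additional coordinates encoding pairs of `forcing' hyperplanes whose prescribed sign pattern constrains any putative $d$-realisation to collapse onto a 2-dimensional affine flat, thereby pulling non-realisability back to the planar case; or, alternatively, to invoke higher-rank analogues of Theorem~\ref{thm:matroids} available in the oriented matroid literature, producing minor-minimally non-realisable uniform oriented matroids of rank $d+1$ whose tope sets directly supply the required $S$.

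The main obstacle is precisely this $d > 2$ extension: either route requires a careful verification that minor-minimality of non-realisability survives the lift, i.e., that the auxiliary coordinates can themselves be removed without producing a non-realisable residue, and that the forcing gadget is flexible enough not to destroy realisability of $S_{-i}$. The $d = 2$ case itself, by contrast, is a direct consequence of Theorem~\ref{thm:matroids} combined with the tope-counting identity for simple arrangements.
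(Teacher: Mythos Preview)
Your $d=2$ argument is essentially correct and aligns with the paper's approach: both derive the planar case directly from Theorem~\ref{thm:matroids}, with the extra work of reorienting to ensure $(+,\dots,+),(-,\dots,-)\in S$ (the paper defers this step to the relabelling procedure in \cite{kang2012sphere}). Your counting argument for non-realisability is a reasonable way to make explicit what the paper leaves to the citation; one small point you should state is that for \emph{simple} arrangements the tope set determines the full combinatorial description, so a line arrangement $\mathcal H$ with tope set equal to $S$ would genuinely be a stretching of $\mathcal P$, not merely something with the same cells.

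The genuine gap is the step to $d>2$. Neither of your two proposed routes is carried out, and both face exactly the obstacle you name: you must add structure that forces non-$d$-realisability while remaining individually removable without destroying realisability of the minors. The paper sidesteps this with a short induction on $d$, using a lemma of Kang and M\"uller \cite{kang2012sphere}: a set $S\subseteq\{-,+\}^n$ containing the two extremal vectors is $d$-realisable if and only if $S\times\{+,-\}\subseteq\{-,+\}^{n+1}$ is $(d+1)$-realisable. Given a minimally non-$d$-realisable $S$ from the inductive hypothesis (with $n$ large enough), set $S':=S\times\{+,-\}$. Then $S'$ is not $(d+1)$-realisable, and for $i\le n$ one has $S'_{-i}=S_{-i}\times\{+,-\}$, which is $(d+1)$-realisable by the lemma. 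The only uncertain minor is $S'_{-(n+1)}=S$: if it happens to be $(d+1)$-realisable, then $S'$ is the desired example; if not, then $S$ itself is minimally non-$(d+1)$-realisable, since each $S_{-i}$ is already $d$-realisable and hence $(d+1)$-realisable. Either branch closes the induction. This is your ``forcing coordinate'' idea in its cleanest form, but with the Kang--M\"uller equivalence in hand the verification that minor-minimality survives becomes a two-line case split rather than an open-ended gadget construction.
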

\begin{proof}
	For $d = 2$, this is a direct consequence of \Cref{thm:matroids} after applying the relabelling procedure described in the proof of Theorem~10 of \cite{kang2012sphere} -- relabelling is necessary to ensure that $(+,\dots,+),(-,\dots,-)\in S$. 
	
	We are left to show the result for $d > 2$, and we proceed by induction. For this, we will need Lemma~11 of \cite{kang2012sphere} which states that a set $S\subseteq \{-,+\}^n$ with $(+,\dots,+),(-,\dots,-)\in S$ is $d$-realisable if and only if $S \times \{+,-\}$ is $(d+1)$-realisable. Now, let $n_0\in\mathbb N$ be given, and use the inductive hypothesis to find $n \ge n_0+2$ and $S\subseteq \{-,+\}^n$ such that $S$ is not $d$-realisable, but deleting any coordinate yields a $d$-realisable set. Now consider $S' := S \times\{+,-\}$. By the result quoted, $S'$ is not $(d+1)$-realisable, but for $i = 1,\dots,n$, the sets $S'_{-i}$ are $(d+1)$-realisable. If $S'_{-(n+1)}$ ($=S$) also happens to be $(d+1)$-realisable, then we are done, since in this case $S'$ is minimally non-$(d+1)$-realisable. So suppose that $S'_{-(n+1)}$ is \emph{not} $(d+1)$-realisable. In this case, $S'_{-(n+1)}$ is minimally non-$(d+1)$-realisable, since deleting any further coordinate leaves a minor of one of the $S'_{-i}$ which we know to be $(d+1)$-realisable. In either case, we have found a minimal counterexample of size greater than $n_0$, as required.
\end{proof}

\begin{lemma}
	\label{lemma:euclid-minors}
	Fix $d \ge 2$. For every $m_0\in\mathbb N$, there is $m>m_0$ such that there exists a preference profile over $m$ alternatives which is not $d$-Euclidean, yet removing any alternative yields a $d$-Euclidean profile.
\end{lemma}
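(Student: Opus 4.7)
The plan is to feed the minimal non-realisable sign vector sets supplied by \Cref{lemma:minors-realizability} through the reduction of \Cref{thm:total-etr-hard}. Given $m_0$, apply \Cref{lemma:minors-realizability} with $n_0 = \lceil m_0/2\rceil$ to obtain $n > n_0$ and $S \subseteq \{-,+\}^n$ with $(+,\ldots,+),(-,\ldots,-)\in S$ such that $S$ is not $d$-realisable yet every one-coordinate deletion $S_{-i}$ is. Applying the reduction of \Cref{thm:total-etr-hard} to $S$ produces a preference profile $V$ over $m := 2n > m_0$ alternatives $A = \{a_1, b_1, \ldots, a_n, b_n\}$; the ``only if'' direction of that theorem's correctness argument immediately gives that $V$ is not $d$-Euclidean.

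It remains to show that $V|_{A \setminus \{x\}}$ is $d$-Euclidean for every $x \in A$. By the evident $a_i \leftrightarrow b_i$ symmetry (swapping the two labels in the reduction corresponds to flipping the $i$th bit of every $\sigma \in S$, which preserves the hypotheses of \Cref{lemma:minors-realizability}), it is enough to treat $x = a_i$. In $V|_{A \setminus \{a_i\}}$ every voter $v_\sigma$ has tier structure
\[\{a_1,b_1\} \succ_\sigma \cdots \succ_\sigma \{a_{i-1},b_{i-1}\} \succ_\sigma \{b_i\} \succ_\sigma \{a_{i+1},b_{i+1}\} \succ_\sigma \cdots \succ_\sigma \{a_n,b_n\},\]
with the usual within-tier order determined by $\sigma_j$ for $j \ne i$; voters sharing the same $\sigma_{-i}$ collapse to a single vote, which is harmless.

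To embed $V|_{A \setminus \{a_i\}}$ I would take a $d$-hyperplane arrangement $\mathcal H' = (h_j)_{j \ne i}$ realising $S_{-i}$, rescale so that every cell meets $B(0,1)$, and place each voter $v_\sigma$ at a point of $B(0,1)$ inside the cell of $\mathcal H'$ indexed by $\sigma_{-i}$. Then I run the $w$-ball construction of \Cref{thm:total-etr-hard}: for each $j \ne i$, pick $p_j \in h_j \cap B(0,1)$ and set $x(a_j) = w_{j,Rj}^+$, $x(b_j) = w_{j,Rj}^-$ for $R > 4$ large enough that the required ball containments $x(v_\sigma) \in B_{j,Rj}^{\sigma_j}$ hold. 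Finally I place $x(b_i)$ at an arbitrary point of Euclidean norm $Ri$. Because every voter lies in $B(0,1)$, we have $\|x(v_\sigma) - x(b_i)\| \in [Ri-1, Ri+1]$, which for $R > 4$ sits strictly between the bands $(R(i-1)+2, R(i+1)-2)$ occupied by the neighbouring tiers; combined with the triangle-inequality estimates already verified in \Cref{thm:total-etr-hard}, this forces $b_i$ into its correct tier, and the within-tier preferences for $j \ne i$ are inherited verbatim.

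The principal (though mild) obstacle is checking that the unpaired alternative $b_i$, which no longer has a partner equidistant from the bisector, does not spoil the surrounding tier ordering; the displayed distance bound handles this, and no new machinery beyond the proof of \Cref{thm:total-etr-hard} is needed.
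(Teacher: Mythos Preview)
Your proposal is correct and follows essentially the same route as the paper: feed the minimal non-realisable $S$ from \Cref{lemma:minors-realizability} through the reduction of \Cref{thm:total-etr-hard}, and when a single alternative (say $a_i$) is deleted, embed via the realisation of $S_{-i}$ and drop the orphaned $b_i$ at radius $Ri$. Your write-up is in fact a bit more careful than the paper's, which simply asserts that placing $b_i$ at distance $Ri$ from the origin works ``just like in the proof of \Cref{thm:total-etr-hard}''; your explicit band estimate $[Ri-1,Ri+1]\subset (R(i-1)+2,\,R(i+1)-2)$ makes this step transparent.
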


\begin{proof}
	Let $m_0\in\mathbb N$ be given, and find $m'>m_0$ and $S \subseteq \{+,-\}^{m'}$ satisfying the conditions of Lemma~\ref{lemma:minors-realizability}. From $S$, construct the preference profile $V$ over alternative set $\{a_1,b_1\dots,a_{m'}, b_{m'}\}$ with  $2m' > m_0$ alternatives as in the proof of \Cref{thm:total-etr-hard}.
	
	According to the proof of \Cref{thm:total-etr-hard}, $V$ cannot be $d$-Euclidean since $S$ is not $d$-realisable. The profile $V'$ obtained from the $d$-realisable set $S_{-i}$, however, \emph{is} $d$-Euclidean: From the definition of the reduction it is clear that $V'$ is just $V$ with the alternatives $a_i$ and $b_i$ removed. Suppose only one of these is removed: we need to argue that the resulting profile is also $d$-Euclidean. Without loss of generality, we remove $a_i$ from the profile $V$ to obtain profile $V''$. Take the $d$-Euclidean embedding of $V'$ that is produced in the proof of \Cref{thm:total-etr-hard}, and place alternative $b_i$ at any point $x(b_i)$ that is distance $Ri$ away from the origin. Just like in the proof of \Cref{thm:total-etr-hard}, we can see that this embedding makes $V''$ $d$-Euclidean.
\end{proof}

It is worth noting that all these infinitely many minimal counterexamples have the shape $\{ a_1, b_1 \} \succ \{a_2,b_2\} \succ \cdots \succ \{a_n,b_n\}$, and in particular they are single-peaked. 

\begin{theorem}
	The domain of $d$-Euclidean preferences does not admit a finite characterisation by forbidden configurations, for any fixed $d\ge 1$.
\end{theorem}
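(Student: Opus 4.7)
The plan is to derive a contradiction by combining Lemma~\ref{lemma:euclid-minors} with the hereditary nature of the $d$-Euclidean domain. The case $d=1$ is already due to \citet{ChenPW15}, so I focus on $d\ge 2$.

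Suppose, for contradiction, that some finite set $\mathcal{S}$ of profiles characterises the $d$-Euclidean domain by forbidden configurations. Let $M$ denote the maximum number of alternatives appearing in any $W\in\mathcal{S}$; since $\mathcal{S}$ is finite, $M$ is well-defined. A preliminary observation is that every $W\in\mathcal{S}$ must itself fail to be $d$-Euclidean, since $W$ trivially contains itself; if $W$ were $d$-Euclidean, the characterisation would misclassify it.

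Next, invoke Lemma~\ref{lemma:euclid-minors} with $m_0 := M$ to obtain an integer $m>M$ together with a preference profile $V$ over $m$ alternatives that is not $d$-Euclidean, but such that removing any single alternative yields a $d$-Euclidean profile. Since $V$ is not $d$-Euclidean, the assumed characterisation forces $V$ to contain some $W\in\mathcal{S}$. Because $W$ has at most $M<m$ alternatives, obtaining $W$ from $V$ requires deleting at least one alternative, say $a\in A$. Thus $W$ is contained in the subprofile $V|_{A\setminus\{a\}}$, which by Lemma~\ref{lemma:euclid-minors} is $d$-Euclidean. Since the $d$-Euclidean domain is hereditary (\Cref{sec:preliminaries}), $W$ must then be $d$-Euclidean, contradicting the preliminary observation.

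The main obstacle has effectively been absorbed into Lemma~\ref{lemma:euclid-minors}, which transfers the infinitely many minimal non-realisable oriented matroids of \citet{bokowski1989infinite} into the language of preference profiles via the reduction of \Cref{thm:total-etr-hard}. Given that lemma, the remainder is just a short pigeonhole argument coupled with heredity, so no additional technical machinery is needed here.
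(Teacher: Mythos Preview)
Your proof is correct and follows essentially the same approach as the paper: cite \citet{ChenPW15} for $d=1$, then for $d\ge 2$ combine Lemma~\ref{lemma:euclid-minors} with heredity to contradict any putative finite characterisation by noting that the forbidden configuration found in $V$ already sits inside a $d$-Euclidean subprofile obtained by deleting one alternative. Your write-up is slightly more explicit (spelling out why each $W\in\mathcal S$ must be non-$d$-Euclidean), but the argument is otherwise identical.
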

\begin{proof}
	For $d = 1$, this is the main result of \citet{ChenPW15}. For fixed $d\ge 2$, suppose for a contradiction that such a characterisation exists, and let $M$ be the maximum number of alternatives in any of the forbidden configurations. By Lemma~\ref{lemma:euclid-minors}, there exists a profile $V$ over at least $M+1$ alternatives which is not $d$-Euclidean. Since the forbidden configurations characterise the $d$-Euclidean domain, one of the configurations must be contained in $V$. In fact, considering the size of the configuration, it must be contained in $V$ even with one alternative deleted. However this profile \emph{is} $d$-Euclidean, contradicting the fact that the $d$-Euclidean domain is hereditary.
\end{proof}

\section{Other metrics}
\label{sec:other-metrics}

In \Cref{subsec:euclidean-prefs}, we defined Euclidean preferences using the usual Euclidean $\ell_2$-metric, measuring distances by shortest paths in the plane. Other choices of metric may be preferred in certain contexts, and in this section we will briefly consider the effect of using other metrics on the complexity of the recognition problem.

The two metrics we consider here are the $\ell_1$-metric and the $\ell_\infty$-metric. The $\ell_1$-metric is also known as the \emph{cityblock} or \emph{taxicab} or \emph{Manhattan} distance, because it measures distances by shortest paths on a grid like the street network of Manhattan. Formally, the $\ell_1$-norm is defined by
\[ \|(x_1,\dots,x_d)\|_1 := |x_1| + \cdots + |x_d|. \]
Thus, the $\ell_1$-distance $\| x - y \|_1$ of two points $x$ and $y$ is the sum of the absolute distances along each coordinate axis. The $\ell_\infty$-metric, on the other hand, measures the \emph{maximum} distance along a coordinate axis:
\[ \|(x_1,\dots,x_d)\|_\infty := \max \{|x_1|, \dots, |x_d|\}. \]
For each of these metrics (or indeed, any metric space), we can obtain a notion of $d$-Euclidean preferences by just plugging this metric into the definition in line (\ref{eqn:euclidean}). For certain settings, the $\ell_\infty$ metric has a nice interpretation as corresponding to `pessimistic' voters who judge candidate according to their (subjectively) worst feature. The $\ell_1$ metric also has intuitive appeal -- see \citet{eguia2011foundations} and the references therein for arguments in favour of using Euclidean preferences with respect to this metric.

Comparing the $\ell_1$- and $\ell_\infty$-metrics to the $\ell_2$-metric we have used so far, one gets the sense that $\ell_1$ and $\ell_\infty$ are more `discrete' or `combinatorial' than the more geometric~$\ell_2$. Supporting this intuition, we find that the complexity of the recognition problem changes (unless $\textup{NP} = \ER$) when we use one of these metrics.

\begin{theorem}
	\label{thm:np-containment}
	The problems of recognising preference profiles that are $d$-Euclidean with respect to the $\ell_1$-metric or the $\ell_\infty$-metric are contained in \textup{NP} for every $d\ge 1$.
\end{theorem}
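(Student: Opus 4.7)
The plan is to exhibit, in each case, a polynomial-size \emph{combinatorial certificate} that linearises the metric and reduces the existence of an embedding to a polynomial-time linear-programming feasibility check. Unlike the $\ell_2$ case, where the defining inequalities are irreducibly quadratic, both $\ell_1$ and $\ell_\infty$ are piecewise linear, and one only needs to guess which linear piece is active at each voter--alternative pair.

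For the $\ell_1$ metric, the certificate consists, for every voter $v \in V$, every alternative $a \in A$, and every coordinate $i \in \{1,\dots,d\}$, of a sign $s_{v,a,i} \in \{+1,-1\}$; this is $O(nmd)$ bits, hence polynomial in the input. The verifier then solves the linear feasibility problem in the unknown coordinates $x_{v,i}, x_{a,i}$ given by (i) the sign-consistency constraints $s_{v,a,i}(x_{v,i}-x_{a,i}) \ge 0$, and (ii) the preference constraints
\[ \sum_{i=1}^{d} s_{v,a,i}(x_{v,i}-x_{a,i}) \; < \; \sum_{i=1}^{d} s_{v,b,i}(x_{v,i}-x_{b,i}) \qquad \text{whenever } a \succ_v b. \]
Under (i), the two sides of (ii) are exactly $\|x_v-x_a\|_1$ and $\|x_v-x_b\|_1$, so a feasible solution gives a valid $\ell_1$-embedding that is consistent with the guessed signs. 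Strict linear feasibility can be decided in polynomial time, for instance by maximising a slack variable $\epsilon>0$.

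For the $\ell_\infty$ metric, the certificate additionally specifies, for each pair $(v,a)$, a coordinate $i^*(v,a) \in \{1,\dots,d\}$ attaining the maximum together with its sign $t(v,a) \in \{+1,-1\}$. Introducing auxiliary variables $D_{v,a}$ intended to equal $\|x_v-x_a\|_\infty$, the verifier checks feasibility of the system
\[ D_{v,a} \ge \pm(x_{v,i}-x_{a,i}) \text{ for all } i, \qquad D_{v,a} = t(v,a)\bigl(x_{v,i^*(v,a)}-x_{a,i^*(v,a)}\bigr), \]
together with $D_{v,a} < D_{v,b}$ whenever $a \succ_v b$; the first family yields $D_{v,a}\ge\|x_v-x_a\|_\infty$, and the equation combined with the first family at $i=i^*(v,a)$ pins $D_{v,a}$ to the absolute value at the guessed maximiser, so jointly they force $D_{v,a} = \|x_v-x_a\|_\infty$. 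The main obstacle in both constructions is really one of careful bookkeeping: one must argue that every $d$-Euclidean embedding admits \emph{some} consistent combinatorial certificate, which reduces, after an arbitrarily small perturbation, to picking a sign at each coordinate where $x_{v,i}=x_{a,i}$ and an arbitrary maximiser when several coordinates tie in the $\ell_\infty$ case, so that both soundness and completeness of the nondeterministic algorithm go through.
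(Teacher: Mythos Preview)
Your proof is correct and follows essentially the same approach as the paper: guess enough combinatorial data to linearise the piecewise-linear metric, then reduce to (strict) LP feasibility. The only cosmetic difference is that the paper guesses a total order of all points along each coordinate axis (and enforces it via constraints $x_{y,i}\le x_{y',i}-1$), whereas you guess only the sign of $x_{v,i}-x_{a,i}$ for voter--alternative pairs; both certificates are polynomial-size and lead to the same LP verification.
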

\begin{proof}
	We start with the $\ell_1$-metric and show containment in NP by giving a nondeterministic reduction to linear programming. For each of the $d$ coordinate axes of $\R^d$, nondeterministically guess in which order the points corresponding to voters and alternatives appear along that axis. 
	%TODO: wlog these orders can be taken to be strict
	Once we have decided these orderings, we can rewrite the definition of $\ell_1$-Euclidean preferences without using absolute values. Then, we can replace strict inequalities with weak inequalities by introducing additive `slack' constants  \cite[Prop.~3]{elkind2014recognizing}. The result is a linear (feasibility) program, which produces a suitable $\ell_1$-Euclidean embedding if one exists. 
	\begin{alignat}{3}
	d_{v,c,i} &= \pm (x_{v,i} - x_{c,i}) \quad && \text{(distance between $v$ and $c$ along $i$)} \label{eqn:plus-minus} \\
	d_{v,c} &= \textstyle\sum_{i=1}^d d_{v,c,i} && \text{(distance between $v$ and $c$)} \\
	d_{v,a} &\le d_{v,b} - 1 && \text{when $a \succ_v b$} \\
	x_{y,i} &\le x_{y',i} - 1 && \text{when $y$ occurs to the left of $y'$ on axis $i$} 
	\end{alignat}
	In constraints of form (\ref{eqn:plus-minus}), the $\pm$ can be replaced by plus or minus at `compile'-time so that the quantity reflects $|x_{v,i} - x_{c,i}|$.
	
	The argument for the $\ell_\infty$-metric is similar: here we additionally guess for each pair $(v,c)$ in which direction the maximum distance is achieved.
\end{proof}

\section{Conclusions}
The results of this paper are bad news for the $d$-Euclidean domain: because producing a Euclidean embedding will in general be infeasible, we are stuck with heuristic algorithms that may or may not produce a correct output in their allotted time. In some sense, our hardness results show that the estimation algorithms developed for the multidimensional unfolding problem over the past several decades are best possible, in the sense that we cannot hope for exact efficient algorithms. Still, future developments in ETR-solver technology might allow solving practical instances in reasonable time, and perhaps some of the ideas in the area of multidimensional unfolding can be formalised and yield exact algorithms. We have run some preliminary experiments on the PrefLib dataset \cite{mattei2013preflib} using the \texttt{nlsat} solver \cite{jovanovic2012solving} which is part of the \texttt{z3} theorem prover \cite{de2008z3}, and appears to be the strongest ETR-solver available. However, \texttt{nlsat} was unable to decide whether \emph{any} of the PrefLib profiles that we tried was 2- or 3-Euclidean within a time bound of one hour (except for trivial profiles on 3 alternatives).

The general infeasibility of identifying membership in the $d$-Euclidean domain ($d\ge 2$) also means that any efficient algorithm developed for actual voting problems that exploits the spatial structure will need to be given a Euclidean embedding as part of the input. Except perhaps for facility-location type problems, not too many examples come to mind in which the underlying spatial structure is known \emph{a priori}. Thus, such algorithms may turn out to be of limited use.

While the multidimensional case seems nasty, there is hope that we will be able to develop a better understanding of the one-dimensional Euclidean domain in the future. We would like to reiterate here two open problems posed elsewhere in the literature: \citet{elkind2014recognizing} ask whether the 1-Euclidean domain can be recognised by a `combinatorial' algorithm that does not rely on solving a linear program, and \citet{ChenPW15} ask whether there is an explicit and good characterisation of the 1-Euclidean domain by (infinitely many) forbidden configurations. It appears likely that the answer to both of these questions is the same, and a positive answer would require a better structural understanding of the 1-Euclidean domain.

Finally, the versions of the problem we discussed in \Cref{sec:other-metrics} for $\ell_1$ and $\ell_\infty$ are intriguing: What is the precise complexity of the recognition problem? How are these notions related to multidimensional single-peakedness?

\begin{acks}
I thank Edith Elkind and Martin Lackner for helpful discussions, and J\"{u}rgen Bokowski for correspondence about the theory of oriented matroids.
\end{acks}

% Bibliography
\bibliographystyle{ACM-Reference-Format-Journals}
%%% -*-BibTeX-*-
%%% Do NOT edit. File created by BibTeX with style
%%% ACM-Reference-Format-Journals [18-Jan-2012].

\end{document}